\numberwithin{equation}{section}
\newtheorem{definition}{Definition}
\newtheorem{theorem}{Theorem}
\newtheorem{remark}{Remark}
\begin{document}
\title{\bf  GLT hidden structures in mean-field quantum spin systems}
\author[1]{Christiaan J. F. van de Ven}
\author[2]{Muhammad Faisal Khan}
\author[2,3]{S. Serra-Capizzano}
\affil[1]{Friedrich-Alexander-Universit\"{a}t Erlangen-N\"{u}rnberg, Department of Mathematics\\ Cauerstra\ss e 11, 91058 Erlangen (Germany)}
\affil[2]{University of Insubria, Department of Science and High Technology\\ via Valleggio 11, 22100  Como (Italy)}
\affil[3]{University of Uppsala, Department of Information Technology\\ hus 10, L\"{a}gerhyddsv\"{a}gen 1, 75105 Uppsala (Sweden)}
\date{}

\maketitle

\begin{abstract}
\noindent
This work explores structured matrix sequences arising in mean-field quantum spin systems. We express these sequences within the framework of generalized locally Toeplitz (GLT) $*$-algebras, leveraging the fact that each GLT matrix sequence has a unique GLT symbol. This symbol characterizes both the asymptotic singular value distribution and, for Hermitian or quasi-Hermitian sequences, the asymptotic spectral distribution. Specifically, we analyze two cases of real symmetric matrix sequences stemming from mean-field quantum spin systems and determine their associated distributions using GLT theory. Our study concludes with visualizations and numerical tests that validate the theoretical findings, followed by a discussion of open problems and future directions.

\smallskip
\smallskip
\noindent{\em Keywords:} mean-field quantum spin systems, reduced models, GLT matrix-sequences, spectral distribution, zero-distributed matrix-sequences, localization of eigenvalues, extremal eigenvalues, momentary GLT symbols.

\smallskip

\noindent{\em 2010 MSC:}  46L65, 15B05, 15A18 (81R30, 82B20)
\end{abstract}

\section{Introduction}\label{sec: intro}
In recent years, the field of generalized locally Toeplitz (GLT) sequences  has experienced significant advancements, driven by their growing applicability in numerical analysis and applied sciences via highly performing computational algorithms; see \cite{GLT-1,GLT-2,Tilli-LT} for the seminal papers and  see \cite{barbarino2020uni,barbarino2020multi,BaSe,garoni2017,garoni2017} for the systematic treatment of the theory. These developments have not only broadened the theoretical foundation of GLT sequences, but also underscored their relevance in diverse scientific domains; see \cite{Appl-1-GLT,Appl-2-GLT,tom,Appl-3-GLT} for a selection of applications in biomedicine, economics, engineering.

The current paper explores the implications of GLT theory within the realm of mean-field quantum spin systems,  with a primary focus on the quantum Curie-Weiss model. As a paradigmatic example of mean-field interactions in quantum statistical mechanics, the Curie-Weiss model captures key emergent behaviors - such as spontaneous symmetry breaking and phase transitions - in the asymptotic regime of large system sizes \cite{Ven_2020,Ven_2022}. The motivation for this study stems from the mathematical complexity inherent in analyzing quantum spin models, especially in the thermodynamic limit where the number of lattice sites becomes large. While prior research has addressed various aspects of these systems, ranging from spectral properties to the decay of correlations and critical phenomena, a systematic understanding grounded in GLT theory is still lacking. GLT sequences offer a powerful and compact framework to describe the asymptotic behavior of large matrix sequences, making them particularly well-suited to tackle these challenges. By leveraging the structure of GLT theory, we aim to deepen the analytical understanding of mean-field models, which play a central role in condensed matter physics. This necessitates a multidimensional approach that integrates theoretical insights with physical interpretations, ensuring a comprehensive perspective on the problem. By combining probability theory, GLT theory, and numerical methods, this paper aims to establish a novel framework for treating mean-field quantum spin systems as GLT sequences, thereby contributing to both the advancement  of mathematical theory and its applications in physics.

The present article is organized as follows. In Section \ref{sec: tools} we present the main tools from asymptotic linear algebra and especially on the GLT theory. In Section \ref{sec: CW-model} and Section \ref{sec: our-pb} we study the spectral distribution of two matrix-sequences arising in the Curie-Weiss model, namely the unrestricted one and the restricted version. The spectral distributions are checked numerically in Section \ref{sec: num}, together with other more specific features by discussing the links with the analytical properties and the related GLT symbols. Section \ref{sec: end} contains conclusions and an indication of future fruitful research directions, that could stem from the current work.

\section{Matrix-sequences with explicit and hidden structures}\label{sec: tools}

In this section, we first provide formal definitions of spectral and singular value distributions. Then we present two classes of matrices with explicit structures, i.e., diagonal sampling matrices and Toeplitz matrices. Subsequently, we consider asymptotic notions which make sense only for matrix-sequences: we consider in fact the zero-distributed matrix-sequences and the $*$-algebras of $d$-level, $r$-block generalized locally Toeplitz (GLT) matrix-sequences, whose construction is involved \cite{GLT-1,GLT-2,Tilli-LT} and needs topological tools \cite{garoni-topology} like those related to approximating class of sequences \cite{acs,gacs}. Instead we briefly present an equivalent axiomatic characterization taken from the books \cite{garoni2017,garoni2018} for $r=1$, $d\ge 1$ and from the works \cite{barbarino2020uni,barbarino2020multi} for the general setting with $d,r\ge 1$. For our results on the specific problems in Section \ref{sec: CW-model} and in Section \ref{sec: our-pb} the case $d=r=1$ is sufficient. However, as stressed in the conclusions, the natural setting would require either $d>1$ or $r>1$.

\begin{definition}\label{99}
    (Singular Value and Eigenvalue Distribution of a Matrix-Sequence). Let $\{A_n\}_n$ be a matrix-sequence, with $A_n$ of size $d_n$, and let $\psi : D \subset \mathbb{R}^t \to \mathbb{C}$ be a measurable function defined on a set $D$ with $0 < \mu_t(D) < \infty$.
\begin{itemize}
    \item We say that $\{A_n\}_n$ has an (asymptotic) singular value distribution described by $\psi$, and we write $\{A_n\}_n \sim_\sigma \psi$, if
    \[
    \lim_{n \to \infty} \frac{1}{d_n} \sum_{i=1}^{d_n} F(\sigma_i(A_n)) = \frac{1}{\mu_t(D)} \int_D  F(|\psi(\bm{x})|) \, d\bm{x}, \quad \forall F \in C_c(\mathbb{R}).
    \]

    \item We say that $\{A_n\}_n$ has an (asymptotic) spectral (or eigenvalue) distribution described by $\psi$, and we write $\{A_n\}_n \sim_\lambda \psi$, if
    \[
    \lim_{n \to \infty} \frac{1}{d_n} \sum_{i=1}^{d_n} F(\lambda_i(A_n)) = \frac{1}{\mu_t(D)} \int_D  F(\psi(\bm{x})) \, d\bm{x}, \quad \forall F \in C_c(\mathbb{C}).
    \]

    \item If $\psi$ describes both the singular value and eigenvalue distribution of $\{A_n\}_n$, we write $\{A_n\}_n \sim_{\sigma, \lambda} \psi$.
\end{itemize}
    When $\{A_n\}_n \sim_\lambda \psi$, the function $\psi$ is referred to as the \textit{eigenvalue (or spectral) symbol} of $\{A_n\}_n$.
\end{definition}

\subsection{Zero-distributed sequences}\label{ssec: zero-dist}
Zero-distributed sequences are defined as matrix-sequences $\{A_n\}_n$ such that $\{A_n\}_n \sim_\sigma 0$.  The following theorem, taken from \cite{garoni2017}, provides a useful characterization for detecting this type of sequence. We use the natural convention $1/\infty = 0$.
\begin{theorem}
 Let $\{A_n\}_n$ be a matrix-sequence, with $A_n$ of size $d_n$ and let $p\in [1,\infty]$, with $\|X\|_p$ being the Schatten $p$ norm of $X$, that is the $l^p$ norm of the vector its singular values. Let $\|\cdot\|=\|\cdot\|_\infty$ be the spectral norm. Then
\begin{itemize}
    \item $\{A_n\}_n \sim_\sigma 0$ if and only if $A_n = R_n + N_n$ with $\text{rank}(R_n)/d_n \to 0$ and $\|N_n\| \to 0$ as $n \to \infty$;
    \item $\{A_n\}_n \sim_\sigma 0$ if there exists $p \in [1, \infty]$ such that $\|A_n\|_p/d_n^{1/p} \to 0$ as $n \to \infty$.
\end{itemize}
\end{theorem}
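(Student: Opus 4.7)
The plan is to treat the two bullets separately, with the first (the equivalence) being the substantial one and the second following as a quick corollary.

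For the ``if'' direction of the first bullet I would exploit the Weyl-type singular value inequality
\[
\sigma_i(A_n) \;\le\; \sigma_i(R_n) + \|N_n\|, \qquad i=1,\dots,d_n,
\]
applied to $A_n = R_n+N_n$. Writing $r_n = \operatorname{rank}(R_n)$, the inequality gives $\sigma_i(A_n)\le \|N_n\|$ for all $i>r_n$ (since $\sigma_i(R_n)=0$ there). Fix $F\in C_c(\mathbb{R})$ and split the Cesàro average into $i\le r_n$ and $i>r_n$. The first block contributes at most $2\|F\|_\infty\,r_n/d_n\to 0$ by the rank hypothesis, while the second block satisfies $\sigma_i(A_n)\le\|N_n\|\to 0$, so by uniform continuity of $F$ on a compact neighborhood of $0$ the summands $F(\sigma_i(A_n))$ converge uniformly to $F(0)$. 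Combining, $\frac{1}{d_n}\sum_i F(\sigma_i(A_n))\to F(0)$, which is the right-hand side of the definition of $\{A_n\}_n\sim_\sigma 0$ (since $\psi\equiv 0$ forces $F(|\psi|)\equiv F(0)$).

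For the ``only if'' direction I would first extract from the distribution hypothesis the quantitative fact that singular values concentrate at $0$: for every $\epsilon>0$,
\[
\frac{1}{d_n}\bigl|\{i : \sigma_i(A_n)\ge \epsilon\}\bigr| \;\longrightarrow\; 0.
\]
This is obtained by testing against a bump $F\in C_c(\mathbb{R})$ with $0\le F\le 1$, $F(0)=1$, and $\operatorname{supp} F\subset(-\epsilon,\epsilon)$: then $\frac{1}{d_n}\sum F(\sigma_i)\le \frac{1}{d_n}|\{i:\sigma_i<\epsilon\}|$, and the left-hand side tends to $F(0)=1$. A standard diagonal argument then produces a sequence $\epsilon_n\downarrow 0$ such that the count $r_n := |\{i:\sigma_i(A_n)\ge \epsilon_n\}|$ still satisfies $r_n/d_n\to 0$ (pick integers $n_k$ so that the proportion of singular values $\ge 1/k$ drops below $1/k$ for $n\ge n_k$, and set $\epsilon_n=1/k$ on $[n_k,n_{k+1})$). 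Taking the SVD $A_n = U_n\Sigma_n V_n^*$ and splitting $\Sigma_n$ according to the threshold $\epsilon_n$ yields $A_n = R_n+N_n$ with $\operatorname{rank}(R_n)\le r_n$ and $\|N_n\|\le \epsilon_n$, as required.

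For the second bullet, the case $p=\infty$ is immediate: $\|A_n\|_\infty=\|A_n\|\to 0$, so take $R_n=0$, $N_n=A_n$ and apply the first bullet. For $p\in[1,\infty)$, a Markov-type estimate on the singular values,
\[
\bigl|\{i : \sigma_i(A_n)\ge \epsilon\}\bigr|\cdot \epsilon^p \;\le\; \sum_{i=1}^{d_n}\sigma_i(A_n)^p \;=\; \|A_n\|_p^p,
\]
gives, for every fixed $\epsilon>0$, a proportion bounded by $(\|A_n\|_p/d_n^{1/p})^p/\epsilon^p\to 0$. Choosing $\epsilon_n=(\|A_n\|_p/d_n^{1/p})^{1/2}\to 0$ produces the required decomposition via the SVD construction of the first bullet, or alternatively one argues directly: for $F\in C_c(\mathbb{R})$, split the Cesàro sum at level $\epsilon$ to bound the deviation from $F(0)$ by $\omega_F(\epsilon)+2\|F\|_\infty\cdot o(1)$, then let $\epsilon\to 0$.

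The main obstacle is the ``only if'' direction of the first bullet, precisely the diagonalization that simultaneously drives $\epsilon_n\to 0$ and keeps $r_n/d_n\to 0$; every other step is either a Weyl inequality, uniform continuity, or a Markov bound. The argument requires no structural assumption on $A_n$ (neither normality nor squareness is used), which is consistent with the singular-value flavor of $\sim_\sigma$.
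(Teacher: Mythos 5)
Your proof is correct. Note, however, that the paper does not actually prove this theorem: it states it as a known result quoted from the Garoni--Serra-Capizzano monograph (reference \cite{garoni2017}), so there is no in-paper argument to compare against. Your reconstruction is nonetheless faithful to the standard proof in that reference. The two halves of the first bullet are handled exactly as one would expect: the ``if'' direction via the Weyl perturbation bound $\sigma_i(A_n)\le\sigma_i(R_n)+\|N_n\|$ and a split of the Ces\`aro sum at index $r_n=\operatorname{rank}(R_n)$; the ``only if'' direction via an SVD truncation at a slowly decaying threshold $\epsilon_n$, obtained from the concentration estimate $d_n^{-1}\,|\{i:\sigma_i(A_n)\ge\epsilon\}|\to 0$ (tested against a bump function) together with a diagonal extraction that drives $\epsilon_n\downarrow 0$ while keeping $r_n/d_n\to 0$. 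The second bullet follows, as you note, from a Chebyshev--Markov inequality on the singular values, $|\{i:\sigma_i\ge\epsilon\}|\le\epsilon^{-p}\|A_n\|_p^p$, which feeds directly into the first bullet (or, equivalently, into the direct Ces\`aro-splitting argument you sketch). All the steps are sound, and you are also right that no normality or squareness of $A_n$ is needed --- the statement is purely about singular values.

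One small stylistic remark: when you write $\tfrac{1}{d_n}\sum_i F(\sigma_i)\le \tfrac{1}{d_n}|\{i:\sigma_i<\epsilon\}|$ for the bump $F$, it is worth saying explicitly that this uses $\sigma_i\ge 0$, so that $\operatorname{supp}F\subset(-\epsilon,\epsilon)$ forces $F(\sigma_i)=0$ whenever $\sigma_i\ge\epsilon$; the inequality as stated would otherwise not quite follow from $\operatorname{supp}F\subset(-\epsilon,\epsilon)$ alone. You clearly intend this, but the sentence leaves it implicit.
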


\subsection{Unilevel scalar Toeplitz matrices}\label{ssec: toep}
Given ${n} \in \mathbb{N}$, a matrix of the form
\[
[A_{{i}-{j}}]_{{i},{j}=1}^{{n}} \in \mathbb{C}^{{n} \times {n}},
\]
with entries $A_{{k}} \in \mathbb{C}$, ${k} \in [-({n-1}), \dots, {n-1}]$, is called Toeplitz matrix.

Given a matrix-valued function $f : [-\pi, \pi] \to \mathbb{C}$ belonging to $L^1([-\pi, \pi])$, the ${n}$-th Toeplitz matrix associated with $f$ is defined as
\[
T_{{n}}(f) := [\hat{f}_{{i-j}}]_{{i,j=1}}^{{n}} \in \mathbb{C}^{{n} \times {n}},
\]
where
\[
\hat{f}_{{k}} = \frac{1}{2\pi} \int_{[-\pi,\pi]} f({\theta}) e^{-i({k},{\theta)}} d {\theta} \in \mathbb{C}, \quad {k} \in \mathbb{Z},
\]
are the Fourier coefficients of $f$, in which $i$ denotes the imaginary unit.

$\{T_{{n}}(f)\}_{{n} \in \mathbb{N}}$ is the family of (multilevel block) Toeplitz matrices associated with $f$, which is called the \textit{generating function}.

\subsection{Diagonal sampling matrices}\label{ssec: sampl}
Given a function $a : [0,1] \to \mathbb{C}$, we define the diagonal sampling matrix $D_{{n}}(a)$ as the diagonal matrix
\[
D_{{n}}(a) = \text{diag}_{{i}={1}, \dots, {n}} \, a\left( \frac{{i}}{{n}} \right) \in \mathbb{C}^{{n} \times {n}}.
\]

\subsection{GLT Axioms}\label{ssec: GLT axioms}

For fixed $d,r\ge 1$, a $d$-level, $r$-block GLT sequence is a special matrix-sequence equipped with a measurable function $\kappa : [0, 1]^d \times [-\pi, \pi]^d \to \mathbb{C}^{r\times r}$, called symbol. The symbol is essentially unique, in the sense that if $\kappa, \varsigma$ are two symbols of the same GLT sequence, then $\kappa = \varsigma$ a.e. We write $\{A_n\}_n \sim_{\mathrm{GLT}} \kappa$ to denote that $\{A_n\}_n$ is a GLT sequence with symbol $\kappa$.\\
For $d=r=1$, it can be proven that the set of GLT sequences is the $\ast$-algebra generated by the three classes of matrix-sequences defined in \ref{ssec: zero-dist}, \ref{ssec: toep}, \ref{ssec: sampl}, via the closure in the a.c.s. topology in \cite{garoni-topology}: zero-distributed, Toeplitz, and diagonal sampling matrix-sequences are indeed the generators of the induced $\ast$-algebra. The GLT classes satisfy several algebraic and topological properties that are treated in detail in \cite{barbarino2020uni,barbarino2020multi,garoni2017,garoni2018}. Here, we focus on the case of interest of $r=d=1$ and on the main operative properties, listed below, that represent a complete characterization of scalar unilevel GLT matrix-sequences \cite{garoni2017}, equivalent to the full constructive definition in \cite{GLT-1}.

\begin{itemize}
    \item \textbf{GLT 1.} If $\{A_n\}_n \sim_{\mathrm{GLT}} \kappa$ then $\{A_n\}_n \sim_\sigma \kappa$ in the sense of Definition \ref{99}, with $t = 2$ and $D = [0, 1] \times [-\pi, \pi]$. Moreover, if each $A_n$ is Hermitian, then $\{A_n\}_n \sim_\lambda \kappa$, again in the sense of Definition \ref{99} with $t = 2$.
    \item \textbf{GLT 2.} We have
    \begin{itemize}
        \item $\{T_{{n}}(f)\}_{{n}} \sim_{\mathrm{GLT}} \kappa({x}, {\theta}) = f({\theta)}$ if $f : [-\pi, \pi] \to \mathbb{C}$ is in $L^1([- \pi, \pi])$;
        \item $\{D_{{n}}(a)\}_{{n}} \sim_{\mathrm{GLT}} \kappa({x}, {\theta}) = a({x})$ if $a : [0, 1] \to \mathbb{C}$ is Riemann-integrable;
        \item $\{Z_n\}_n \sim_{\mathrm{GLT}} \kappa({x}, {\theta}) = 0$ if and only if $\{Z_n\}_n \sim_\sigma 0$.
    \end{itemize}

    \item \textbf{GLT 3.} If $\{A_n\}_n \sim_{\mathrm{GLT}} \kappa$ and $\{B_n\}_n \sim_{\mathrm{GLT}} \varsigma$, then
    \begin{itemize}
        \item $\{A_n^*\}_n \sim_{\mathrm{GLT}} \kappa^*$;
        \item $\{\alpha A_n + \beta B_n\}_n \sim_{\mathrm{GLT}} \alpha \kappa + \beta \varsigma$ for all $\alpha, \beta \in \mathbb{C}$;
        \item $\{A_n B_n\}_n \sim_{\mathrm{GLT}} \kappa \varsigma$;
        \item $\{A_n^\dagger\}_n \sim_{\mathrm{GLT}} \kappa^{-1}$, provided that $\kappa$ is nonzero a.e., with $X^\dagger$ denoting the Moore-Penrose pseudo-inverse of $X$.
    \end{itemize}
    \item \textbf{GLT 4.} \(\{A_n\}_n \sim_{\text{GLT}} \kappa\) if and only if there exist \(\{B_{n,j}\}_n \sim_{\text{GLT}} \kappa_j\) such that \(\{\{B_{n,j}\}_n\}_j \xrightarrow{\text{a.c.s.wrt j}} \{A_n\}_n\) and \(\kappa_j \rightarrow \kappa\) in measure.
    \item \textbf{GLT 5.} If $\{A_{{n}}\}_n \sim_{\text{GLT}} \kappa$ and $A_{{n}} = X_{{n}} + Y_{{n}}$, where
\begin{itemize}
    \item every $X_{{n}}$ is Hermitian,
    \item $ ||X_{{n}}||, \, ||Y_{{{n}}}|| \leq C$ for some constant $C$ independent of $n$,
    \item ${n}^{-1} \|Y_{{n}}\|_1 \to 0$,
\end{itemize}
then $\{A_{{n}}\}_n \sim_\lambda \kappa$.
   \item \textbf{GLT 6.}  If $\{A_{{n}}\}_n \sim_{\text{GLT}} \kappa$ and each $A_{{n}}$ is Hermitian, then $\{f(A_{{n}})\}_n \sim_{\text{GLT}} f(\kappa)$ for every continuous function $f : \mathbb{C} \to \mathbb{C}$.
\end{itemize}

Notice that a matrix-sequence $\{A_{{n}}\}_n$ as in Axiom \textbf{GLT 5.} is defined as quasi-Hermitian. The related theory developed in \cite{GoSe,BaSe} represents a useful way for overcoming the strict requirements of the Hermitian character of the matrices in the standard a.c.s. topology.

In the following sections, namely Section \ref{sec: CW-model} and Section \ref{sec: our-pb}, we prove that two remarkable matrix-sequences stemming from mean-field quantum spin systems are GLT matrix-sequences. In particular, a properly scaled matrix-sequence associated with the Curie-Weiss model is a zero-distributed Hermitian (real symmetric) matrix-sequence i.e. GLT matrix-sequence with $0$ GLT symbol, while the one associated with the restricted model is a GLT matrix-sequence with an interesting non-trivial GLT symbol.
For the sake of notational clarity, for the matrix-sequence in Section \ref{sec: CW-model}, the size is $d_N=2^N$.

\section{Curie-Weiss model}\label{sec: CW-model}
We consider the Hamiltonian for the quantum Curie-Weiss model for ferromagnetism, which takes the form
\begin{align}
    H_{\Lambda_N}^{\text{CW}}=-\frac{\Gamma}{2|\Lambda_N|}\sum_{x,y\in \Lambda_N}\sigma_3(x)\sigma_3(y)-B\sum_{x\in\Lambda_N}\sigma_1(x), \label{eq:curieweiss}
\end{align}
where $\Lambda_N$ is an arbitrary finite subset of $\mathbb{Z}^\ell$, $\Gamma>0$ scales the spin-spin coupling, and $B$ is an external magnetic field. This model describes a chain of $N$ immobile spin-$1/2$ particles with ferromagnetic coupling in a transverse magnetic field. The Hamiltonian acts on the Hilbert space $\mathcal{H}_{\Lambda_N}=\otimes_{x\in\Lambda_N}H_x$, where $H_x=\mathbb{C}^2$.
The operator $\sigma_i(x), \dots (i=1,2,3)$ acts as the Pauli matrix $\sigma_i$ on $H_x$ and acts as the unit matrix $\mathbbm{1}_2$ elsewhere. The single Pauli matrices are explcitly given by
\begin{align}\label{basic matrices}
    \sigma_1=\begin{pmatrix}
    0 & 1 \\ 1 & 0
\end{pmatrix},\ \ \ \sigma_2=\begin{pmatrix}
    0 & -i \\ i & 0
\end{pmatrix},\ \ \ \sigma_3=\begin{pmatrix}
    1 & 0 \\ 0 & -1
\end{pmatrix}.
\end{align}
\noindent
In contrast to locally interacting quantum spin models, the spatial dimension of this model does not influence the behaviour. This follows from the fact that for the averages
\begin{align}\label{spinss}
    S_i^{\Lambda_N}=\frac{1}{|\Lambda_N|}\sum_{x\in\Lambda_N}\sigma_i(x), \ \ \ (i=1,2,3),
\end{align}
we can write the Hamiltonian \eqref{eq:curieweiss} (see e.g. \cite{Ven_2024}) as
\begin{align}\label{hamcw}
    H_{\Lambda_N}^{\text{CW}}=-|\Lambda_N|\bigg{(}\frac{\Gamma}{2}(S_3^{\Lambda_N})^2+BS_1^{\Lambda_N})\bigg{)}=
    |\Lambda_N|\bigg{(}h_0^{\text{CW}}({\bf S}^{\Lambda_N})\bigg{)},
\end{align}
for ${\bf S}^{\Lambda_N}=(S_1^{\Lambda_N},S_2^{\Lambda_N},S_3^{\Lambda_N})$, and the choice
\begin{align}\label{classical CW}
B^3\ni (x,y,z)\mapsto h_0^{\text{CW}}(x,y,z)=-\bigg{(}\frac{\Gamma}{2}z^2+Bx\bigg{)},
\end{align}
with $B^3=\{(x,y,z)\ | \ x^2+y^2+z^2\leq 1\}$ the unit three-dimensional sphere in $\mathbb{R}^3$.

Using spherical coordinates ${\bf e}(\Omega)=(\sin{\vartheta}\cos{\varphi},\sin{\vartheta}\sin{\varphi},\cos{\vartheta})$,  with $\vartheta\in [0,\pi]$ and $\varphi\in [0,2\pi)$; for radius $u\in [0,1]$, we may now rewrite
 \begin{align}\label{classicalCWmodel}
 h_0^{CW}(u\cdot {\bf e}(\Omega))=-\bigg{(}\frac{\Gamma}{2}(u\cos{\vartheta})^2+Bu\sin{\vartheta}\cos{\varphi}\bigg{)},
\end{align}
as being a polynomial function on $[0,1]\times\mathbb{S}^2$.

As a result, we may as well consider the quantum Curie–Weiss
Hamiltonian \eqref{eq:curieweiss} in $\ell=1$, so that we may simply write $|\Lambda_N|=N$ and $H_{N}^{\text{CW}}:= H_{\Lambda_N}^{\text{CW}}$.

\subsection{The CW model as GLT-sequence}
Here we show that the real symmetric matrix-sequence $\{\bar{H}_{N}^{\text{CW}}\}_N$, $d_N=2^N$, where $\bar{H}_{N}^{\text{CW}}:= H_{N}^{\text{CW}}/N$, is a basic GLT matrix-sequence with GLT symbol $0$, i.e., a zero-distributed matrix-sequence (Axiom \textbf{GLT 2.}, third item).

\begin{theorem}\label{main result}
The normalized  CW Hamiltonian $\bar{H}_{N}^{\text{CW}}:= H_{N}^{\text{CW}}/N$ defines $\{\bar{H}_{N}^{\text{CW}}\}_N$ as a zero-distributed GLT sequence.
\end{theorem}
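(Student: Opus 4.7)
The plan is to exploit the compact representation
\[
\bar H_{N}^{\mathrm{CW}} = -\frac{\Gamma}{2}(S_3^{\Lambda_N})^2 - B\, S_1^{\Lambda_N},
\]
obtained by dividing \eqref{hamcw} by $N$, and to show that each of the two summands individually generates a GLT sequence with symbol $0$. Linearity in \textbf{GLT 3} will then yield $\{\bar H_{N}^{\mathrm{CW}}\}_N \sim_{\mathrm{GLT}} 0$, which is precisely the statement that $\{\bar H_{N}^{\mathrm{CW}}\}_N$ is a zero-distributed GLT sequence by the third item of \textbf{GLT 2}.

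The central step is to prove $\{S_i^{\Lambda_N}\}_N \sim_\sigma 0$ for $i = 1, 3$ via the Schatten-norm criterion of Section \ref{ssec: zero-dist} with $p = 2$. Using $\sigma_i^2 = \mathbbm{1}_2$ and $\mathrm{tr}(\sigma_i)=0$ together with the tensor factorization defining $\sigma_i(x)$, one gets $\mathrm{tr}(\sigma_i(x)\sigma_i(y)) = 2^N\,\delta_{x,y}$ (for $x\ne y$ the product is a tensor of two traceless factors and $2^{N-2}$ identities), so that
\[
\mathrm{tr}\!\left((S_i^{\Lambda_N})^2\right) = \frac{1}{N^2}\sum_{x,y\in\Lambda_N} \mathrm{tr}\bigl(\sigma_i(x)\sigma_i(y)\bigr) = \frac{2^N}{N}.
\]
Hence $\|S_i^{\Lambda_N}\|_2 / d_N^{1/2} = N^{-1/2} \to 0$ with $d_N = 2^N$, and the theorem of Section \ref{ssec: zero-dist} gives $\{S_i^{\Lambda_N}\}_N \sim_\sigma 0$.

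Given that every $S_i^{\Lambda_N}$ is real symmetric, the third item of \textbf{GLT 2} upgrades the above to $\{S_i^{\Lambda_N}\}_N \sim_{\mathrm{GLT}} 0$. Applying \textbf{GLT 6} with the continuous function $f(t) = t^2$ to the Hermitian GLT sequence $\{S_3^{\Lambda_N}\}_N$ then produces $\{(S_3^{\Lambda_N})^2\}_N \sim_{\mathrm{GLT}} 0$, and a final appeal to the linearity clause of \textbf{GLT 3} assembles $\{\bar H_{N}^{\mathrm{CW}}\}_N \sim_{\mathrm{GLT}} 0$, completing the argument.

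The only genuine computation is the Pauli trace identity, which is a tensor-product bookkeeping exercise. I do not foresee a serious obstacle: at its core the result is a manifestation of mean-field concentration, since the eigenvalues of $S_i^{\Lambda_N}$ are the scaled symmetric binomial values $(N - 2k)/N$ with multiplicities $\binom{N}{k}$, which concentrate at $0$ as $N\to\infty$; the GLT axioms then propagate this zero character through the quadratic term.
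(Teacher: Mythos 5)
Your proof is correct, and it takes a genuinely different and considerably more elementary route than the one in the paper. The paper's argument is structural and quantum-mechanical in flavour: it exploits the homogeneous decomposability of the Hamiltonian into $SU(2)$-irreducible blocks of dimension $2J+1$ with Mihailov multiplicities $C(J,N)$, introduces the measures $\nu_N$ and the coherent-state resolution of identity, invokes the quantum-classical correspondence bound \eqref{onemli2} of Manai--Warzel, and finishes with a Chernoff concentration estimate for the binomial distribution plus a Stone--Weierstrass density argument. Your argument bypasses all of this: the Pauli trace identity $\mathrm{tr}(\sigma_i(x)\sigma_i(y))=2^N\delta_{x,y}$ yields $\|S_i^{\Lambda_N}\|_2^2=2^N/N$, so the Schatten $p=2$ criterion from Section~\ref{ssec: zero-dist} gives $\{S_i^{\Lambda_N}\}_N\sim_\sigma 0$ with the explicit rate $\|S_i^{\Lambda_N}\|_2/d_N^{1/2}=N^{-1/2}$, and the GLT axioms propagate the zero symbol through the quadratic term and the linear combination. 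What the paper's route buys is an exposure of the underlying block structure and the Berezin--Toeplitz quantization of the spin sectors, which is precisely what feeds the closing remark about the restricted model and the conjectures about hidden multilevel/block/reduced GLT structure; what your route buys is brevity, elementarity, and a quantitative Frobenius-norm decay rate that the distributional argument of the paper does not immediately give. Two small points: (a) the third item of \textbf{GLT 2} does not require Hermiticity, so the sentence attributing the upgrade from $\sim_\sigma 0$ to $\sim_{\mathrm{GLT}}0$ to real symmetry is a slight misattribution, though harmless (you do need Hermiticity later for \textbf{GLT 6} and for the final spectral, rather than merely singular-value, conclusion); (b) the invocation of \textbf{GLT 6} can be replaced by the product rule \textbf{GLT 3.3} applied to $\{S_3^{\Lambda_N}\}_N\cdot\{S_3^{\Lambda_N}\}_N$, or even more directly by $\|(S_3^{\Lambda_N})^2\|_2\le\|S_3^{\Lambda_N}\|_\infty\,\|S_3^{\Lambda_N}\|_2\le\|S_3^{\Lambda_N}\|_2$ since $\|S_3^{\Lambda_N}\|_\infty\le 1$.
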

\begin{proof}
The Curie-Weiss Hamiltonian is self-adjoint and in fact all the considered matrices are real symmetric. Thus in view of Definition \ref{99} (with  $d_N=2^N$, $D=[0,1]\times\mathbb{S}^2$), it suffices to prove that  for all real-valued $F\in C_c(\mathbb{R})$
\begin{align}\label{limit1}
    \lim_{N\to\infty}\frac{1}{2^N}Tr_{2^N}[F(\bar{H}_N^{\text{CW}})]=F(0),
\end{align}
where $Tr_{2^N}$ denotes the trace on $\mathcal{H}_N=\bigotimes_{i=1}^N\mathbb{C}^2$.
The convergence in \eqref{limit1} for all possible test functions $F$ implies that the sequence $(\bar{H}_{\Lambda_N}^{\text{CW}})_N$ is a zero-distributed GLT sequence.

Since the spectrum of the normalized Curie-Weiss Hamiltonian $\bar{H}_N^{CW}$ is uniformly bounded in $N$ and contained in a connected compact subset $C \subset \mathbb{R}$ (see also \cite{Ven_2020,Ven_2022} for further details), it suffices to consider functions $F$ supported on $C$. By the Stone-Weierstrass theorem, any continuous function $F$ on $C$ can be uniformly approximated by polynomials  restricted to $C$: in fact this argument is standard and was used e.g. in \cite{GoSe} in the context of the zero distribution of orthogonal polynomials, when the Jacobi operator is perturbed by a non self-adjoint compact operator. Moreover, since the (normalized) trace operation is linear and continuous with respect to uniform convergence of continuous functions on the spectrum, it suffices to prove \eqref{limit1} for all polynomials $P$ restricted to $C$. Thus, we will establish \eqref{limit1} for all such polynomials $P$, which by approximation will imply the result for any general $F \in C_c(\mathbb{R})$ supported on $C$.

To do so, we stress that the Hamiltonian is homogeneously decomposable \cite[Section 2]{Cegla_Lewis_Raggio_1988},
implying that the model is block diagonal with the following tracial decomposition
\begin{align*}
   Tr_{2^N}[\bar{H}_N^{\text{CW}}]=\frac{1}{2^N}\sum_{J\in\mathbb{J}_N}C(J,N)Tr_{2J+1}[\bar{H}_N^{\text{CW}}(J)],
\end{align*}
where, by \cite{Mihailov_1977}, the quantities
$$C(J,N)=\frac{2J+1}{N+1}\binom{N+1}{\frac{N}{2}+J+1},$$
are the multiplicities of the $(2J+1)$-dimensional irreducible unitary representations arising in the decomposition of the $N$-fold tensor product representation of $SU(2)$ onto $\mathbb{C}^2$ with itself. Here, $\mathbb{J}_N=\{0,1,...,N/2\}$ if $N/2$ is an integer, and equals $\{1/2,3/2,...,N/2
\}$ if $N/2$ is a half-integer.
The $(2J+1)\times (2J+1)$-dimensional matrix $\bar{H}_N^{\text{CW}}(J)$ is  defined as \begin{align}\label{restricted CW}
\bar{H}_N^{\text{CW}}(J):=
h_0^{CW}({\bf S}^N)|_{2J+1}.
\end{align}

Define the function
\begin{align}\label{newfunction}
    P_N(J):=P(\bar{H}_N^{\text{CW}}(J))=P(h_0^{CW}({\bf S}^N)|_{2J+1})=P(h_0^{CW}({\bf S}^N))|_{2J+1},
\end{align}
so that
\begin{align}\label{id0}
\frac{1}{2^N}Tr_{2^N}[P(h_0^{CW}({\bf S}^N))]=\frac{1}{2^N}\sum_{J\in\mathbb{J}_N}C(J,N)Tr_{2J+1}(P_N(J)).
\end{align}

This allows us to introduce a sequence of probability measures $\nu_N$, defined for Borel measurable sets $E\subset [0,1]$, by
\begin{align}\label{idprob}
    \nu_N(E)=\frac{1}{2^N}\sum_{\{J\in\mathbb{J}_N\ |\ 2J/N\in E\}}C(J,N)(2J+1).
\end{align}
Note that indeed $2^N=\sum_{\{J\ |\ 2J/N\in [0,1]\}}C(J,N)(2J+1)$, so that $\nu_N([0,1])=1$.
We recall the well-known {\em resolution of the identity} \cite{Ven_2020}, that is,
\begin{align}\label{resolution identity}
    \mathbbm{1}_{2J+1}=\frac{2J+1}{4\pi}\int_{\mathbb{S}^2}d\mu_{\mathbb{S}^2}(\Omega)Pr(J,\Omega),
\end{align}
where $d\mu_{\mathbb{S}^2}$ is the uniform measure on the 2-sphere, and $Pr(J,\Omega)$ is the one-dimensional projection onto the linear span of the $(2J+1)$-dimensional spin-coherent state vector labeled by the point $\Omega\in\mathbb{S}^2$.
Let $$p_N\bigg{(}\frac{2J}{N},\Omega\bigg{)}:=Tr_{2J+1}[P_N(J)Pr(J,\Omega)].$$

By inserting the identity \eqref{resolution identity}, we can write
\begin{align}\label{id1}
Tr_{2J+1}(P_N(J))=\frac{2J+1}{4\pi}\int_{\mathbb{S}^2}d\mu_{\mathbb{S}^2}(\Omega)Tr_{2J+1}[P_N(J)Pr(J,\Omega)]= \frac{2J+1}{4\pi}\int_{\mathbb{S}^2}d\mu_{\mathbb{S}^2}(\Omega)p_N\bigg{(}\frac{2J}{N},\Omega\bigg{)}.
\end{align}

If we set $K_N(du,d\Omega):=\nu_N(du)\times \frac{1}{4\pi}\mu_{\mathbb{S}^2}(d\Omega)$, equations \eqref{id0}, \eqref{idprob} and \eqref{id1} imply that
$$\frac{1}{2^N}Tr_{2^N}[P(h_0^{CW}({\bf S}^N))]=\int_{[0,1]\times \mathbb{S}^2}K_N(du,d\Omega)p_N(u,\Omega),$$
where the integral over $[0,1]$ should be interpreted as the discrete integral with  respect to $\nu_N$. It is clear that $K_N$ is a probability meausure for each $N$.
Using the previous preparatory statements, we are now in a position to prove the following facts.
\begin{itemize}
    \item[(i)]
    For all $N\in\mathbb{N}_+$, it holds
\begin{align}\label{onemli1}
\sup_{J\in\mathbb{J}_N}\sup_{\Omega\in\mathbb{S}^2}\bigg{|}p_N\bigg{(}\frac{2J}{N},\Omega\bigg{)} - P\bigg{(}h_0^{\text{CW}}\bigg{(}\frac{2J}{N}{\bf e}(\Omega)\bigg{)}\bigg{)}\bigg{|}\leq \frac{C}{N},
\end{align}
    the constant $C$ being independent of $J$, $\Omega$ and $N$.
    \item[(ii)] The sequence of probability measures $(K_N)_N$ converges setwise to the probability measure $\delta_0\times \frac{1}{4\pi}\mu_{\mathbb{S}^2}$, with $\delta_0$ the Dirac measure concentrated at $0\in [0,1]$.
\end{itemize}
\noindent
For (i), we first rely on the following fact obtained in \cite{Manai_Warzel}. For any  non-commuting self-adjoint polynomial $H(J):=P_0({\bf S}^N)|_{2J+1}$ on $\mathbb{C}^{2J+1}$, it holds
\begin{align}\label{onemli2}
    \sup_{J\in \mathbb{J}_N}\sup_{\Omega\in\mathbb{S}^2}\bigg{|}Tr_{2J+1}[Pr(J,\Omega)H(J)]-P_0\bigg{(}\frac{2J}{N}{\bf e}(\Omega)\bigg{)}\bigg{|}\leq \frac{C}{N},
\end{align}
where the constant $C>0$ only depends on the chosen $P_0$, not on $J$, $\Omega$ and $N$.

To prove \eqref{onemli1}, we stress that for any given polynomial $P$, the operator $P\circ h_0^{CW}({\bf S}^N)$ is again a polynomial in three non coummuting self-adjoint spin operators.

As a matter of fact, the statement \eqref{onemli1} follows from \eqref{onemli2} for the choice $P_0=P\circ h_0^{CW}$.

In order to see that (ii) holds, we rewrite
\begin{align}\label{nu}
    \nu_N(E)&=\frac{1}{2^N}\sum_{\{J\ |\ 2J/N\in E\}}\frac{(2J+1)^2}{N+1}\binom{N+1}{N/2+J+1}.
\end{align}
Define $X_N\sim Bin(N+1,1/2)$ and write $J=X_N-\frac{N}{2}-1$. Use the binomial probability mass function (pmf) $BinProb$,
$$BinProb(X_N=k)=\binom{N+1}{k}\frac{1}{2^{N+1}},$$ so that \eqref{nu} becomes
\begin{align*}
    \nu_N(E)&=2\sum_{\{X_N\ |\ \frac{2(X_N-\frac{N}{2}-1)}{N}\in E\}}\frac{(2(X_N-\frac{N}{2}-1)+1)^2}{N+1}BinProb(X_N).
\end{align*}
Consider $\epsilon>0$ arbitrary and let $E\subset [0,1]$ be a measurable set such that $E\cap [0,\epsilon]=\emptyset$.
We deduce that
$$\frac{2J}{N}\in E\ \ \ \implies \ \ \  X_N \in \bigg{(} \frac{N(\epsilon+1)+2}{2}, N+1\bigg{]}.$$
As a result,
\begin{align}\label{concentration}
&BinProb\bigg{(}\frac{2(X_N-\frac{N}{2}-1)}{N}\in E\bigg{)}\leq BinProb\bigg{(}X_N\geq \frac{N(\epsilon+1)+2}{2}\bigg{)}\nonumber\\&\leq \exp{\bigg{(}\frac{-(\frac{N\epsilon+1}{N+1})^2\frac{N+1}{2}}{3}\bigg{)}}\nonumber\\&=
\exp{\bigg{(}-\frac{(N\epsilon+1)^2}{6(N+1)}\bigg{)}},
\end{align}
where, in the second inequality, we have exploited Chernoff's inequality for the binomial distribution \cite[Theorem 4.4]{Mitz}. To derive this, consider $\mu=\frac{N+1}{2}$ and $p=1/2$, so that on account of the standard Chernoff inequality with $0<\delta< 1$, we find
$$BinProb(X_N\geq (1+\delta)\mu)\leq e^{-\frac{\delta^2}{3}\mu}.$$
If $X_N\sim Bin(N+1,1/2)$, then $\mu=\frac{N+1}{2}$, so that for the choice $\delta=(a-\mu)/\mu$ with $a=(N(\epsilon+1)+2)/2$ one indeed obtains \eqref{concentration}, as certainly $0<\delta< 1$.
The inequality \eqref{concentration} implies that the pmf $BinProb$ concentrates around zero, as $N\to\infty$, as long as $E$ is bounded away from zero.

It follows that for all $E$ with $E\cap [0,\epsilon]=\emptyset$,
\begin{align}\label{measures}
\nu_N(E)\leq 2(N+1)^2e^{-(N\epsilon+1)^2/6(N+1)}.
\end{align}
As a result, $(\nu_N)_N$ converges setwise to the Dirac mass concentrated at zero. Since the measure $\mu_{\mathbb{S}^2}$ does not depend on $N$, the same statement holds true for $K_N=\nu_N\times  \frac{1}{4\pi}\mu_{\mathbb{S}^2}$, i.e. $(K_N)_N$ converges setwise to $\delta_0\times  \frac{1}{4\pi}\mu_{\mathbb{S}^2}$. This shows the validity of $(ii)$.

We are finally in a position to prove \eqref{limit1}. To this avail, we notice that $P\circ h_0^{CW}$ is uniformly continuous. Hence, given $\varepsilon>0$ there is $\delta>0$ such that for all $u,u'\in [0,1]$ and ${\bf e}(\Omega), {\bf e}(\Omega')\in \mathbb{S}^2$, for which $||u{\bf e}(\Omega)-u'{\bf e}(\Omega')||< \delta$, it holds
\begin{align}\label{uniformly cts}
    |P(h_0^{CW}(u\cdot {\bf e}(\Omega))-P(h_0^{CW}(u'\cdot {\bf e}(\Omega'))|<\varepsilon.
\end{align}
We now estimate
\begin{align*}
    &\bigg{|}\frac{1}{4\pi}\int_{[0,1]\times \mathbb{S}^2}K_N(du,d\Omega)p_N(u,\Omega)-P(0)\bigg{|}\leq\\&\frac{1}{4\pi}\int_{[0,\delta)\times \mathbb{S}^2}K_N(du,d\Omega)|p_N(u,\Omega)-P(0)|+\frac{1}{4\pi}\int_{[\delta, 1]\times \mathbb{S}^2}K_N(du,d\Omega)|p_N(u,\Omega)-P(0)|\leq\\& \underbrace{\sup_{\substack{J\in\mathbb{J}_N \\ 0\leq \frac{2J}{N}< \delta}}\sup_{\Omega\in\mathbb{S}^2}\bigg{|}p_N\bigg{(}\frac{2J}{N},\Omega\bigg{)}-P(0)\bigg{|}}_{\text{(I)}} + \underbrace{\sup_{\substack{J\in\mathbb{J}_N \\ \delta \leq \frac{2J}{N}\leq 1}}\sup_{\Omega\in\mathbb{S}^2}\bigg{|}p_N\bigg{(}\frac{2J}{N},\Omega\bigg{)}-P(0)\bigg{|}\nu_N([\delta,1])}_{\text{(II)}},
\end{align*}
To estimate term (II), we notice that following bound holds, i.e.
$$ \sup_{\substack{J\in\mathbb{J}_N \\ \delta\leq \frac{2J}{N}\leq 1}}\sup_{\Omega\in\mathbb{S}^2}\bigg{|}p_N\bigg{(}\frac{2J}{N},\Omega\bigg{)}-P(0)\bigg{|}\leq 2\|P\|_\infty,$$
since  for all $J$, $N$ and $\Omega$, one has
$$p_N\bigg{(}\frac{2J}{N},\Omega\bigg{)}\leq \|P\|_\infty.$$
On account of \eqref{measures}, it holds $\nu_N([\delta,1])< 2(N+1)e^{-(N\delta+1)^2/6(N+1)}$. It follows that there exists $N$ sufficiently large, such that $(II)$ can be made smaller than $\varepsilon$.

For (I), we first exploit the triangle inequality: for all $N\in\mathbb{N}_+$, $J\in\mathbb{J}_N$ and $\Omega\in\mathbb{S}^2$, it holds
\begin{align}\label{triangle}
    \bigg{|}p_N\bigg{(}\frac{2J}{N},\Omega\bigg{)}-P(0)\bigg{|}\leq \bigg{|}p_N\bigg{(}\frac{2J}{N},\Omega\bigg{)}-P\bigg{(}h_0^{CW}\bigg{(}\frac{2J}{N}{\bf e}(\Omega)\bigg{)} \bigg{)} \bigg{|}+\bigg{|}P\bigg{(}h_0^{CW}\bigg{(}\frac{2J}{N}{\bf e}(\Omega)\bigg{)} \bigg{)}-P(0)\bigg{|}.
\end{align}
If we now apply the supremum over $J\in\mathbb{J}_N$, for which $0\leq\frac{2J}{N}< \delta$ and the supremum over $\Omega\in\mathbb{S}^2$, then, for $N$ large enough, the first summand in \eqref{triangle} is bounded by $\varepsilon$ on account of \eqref{onemli1}. The second summand is bounded by $\varepsilon$ due to uniform continuity  of $P\circ h_0^{CW}$, cf. \eqref{uniformly cts}, which is applied at  the zero vector, i.e., at the point $0\cdot {\bf e}(\Omega))={\bf 0}$, for which
it holds $h_0^{CW}(0\cdot {\bf e}(\Omega))=0$.

This shows the validity of \eqref{limit1}, thereby concluding the proof of the theorem.

\end{proof}

\begin{remark}
 We note that the previous result readily generalizes to any (normalized) mean-field quantum spin model expressed as a polynomial in the total spin operator. This follows from \cite[Proposition II.2]{Raggio_Werner_1989}, which states that the continuous functional calculus of a (normalized) mean-field model, representing a so-called quasi-symmetric sequence, remains quasi-symmetric. A detailed exploration of this generalization is left for future work.
 \hfill$\blacksquare$
\end{remark}

\section{Restricted Curie-Weiss model}\label{sec: our-pb}
Through direct inspection, the Curie-Weiss model $H_{N}^{\text{CW}}$ preserves the symmetric subspace $\text{Sym}^N(\mathbb{C}^2)$, which has dimension $N+1$. Consequently, the model can be restricted this subspace. With a slight abuse of notation, we denote this restricted $(N+1)\times (N+1)$ matrix by $H^{s}_N$, and, as before,  normalize it by the factor $1/N$, yielding $\Bar{H}^{s}_N$. This restricted matrix represents a single quantum spin system with spin quantum number $J=N/2$, a setting commonly analyzed in the classical limit  $J\to\infty$ \cite{Moretti_vandeVen_2020,Ven_2020}.  Specifically, the matrix aligns with the choice $J=N/2$ in \eqref{restricted CW}.
As demonstrated in the proof of Theorem \ref{main result}, the resulting family of matrices $\{\Bar{H}^{s}_N\}_N$  a Berezin-Toeplitz operator with symbol $h_0^{CW}\in C(\mathbb{S}^2)$, corresponding to the case $u=1$ in \eqref{classicalCWmodel}.
\begin{remark}\label{change variables}
    By making the following change of variables
$$\cos{\vartheta}\mapsto 2x-1;$$
$$\varphi\mapsto \theta:=\varphi-\pi,$$
it follows that the symbol (for $u=1$) reads
\begin{equation}\label{symbol restricted}
h_0^{CW}(x,\varphi)=-\frac{\Gamma}{2}(2x-1)^2 {-}2B\sqrt{(1-x)x}\cos{\theta}, \ \ \ x\in [0,1], \ \ \ \theta\in [-\pi,\pi].
\end{equation}
\hfill$\blacksquare$
\end{remark}
In the sequel, we step-by-step prove that the Berezin-Topelitz operator $\Bar{H}^{s}_N$ is indeed a GLT sequence. For achieving this, we recall that there exists a basis in which $\Bar{H}^{s}_N$ is represented by the following matrix
\begin{equation}\label{seq restricted}
    \Bar{H}^{s}_N = \underset{1 \leq k \leq N+1}{\text{diag}} \left( -\frac{\Gamma}{2} \left( \frac{2k}{N} - 1 \right)^2 \right)
+ \text{tridiag} \left( -B \sqrt{1-\frac{(k-1)}{N}} \sqrt{\frac{k}{N}} \quad 0 \quad -B \sqrt{1-\frac{k}{N}} \sqrt{\frac{k+1}{N}} \right),
\end{equation}
as proven in \cite{Ven_Groenenboom_Reuvers_Landsman}.
\begin{theorem}\label{main result-bis}
With reference to the setting in (\ref{symbol restricted}) and (\ref{seq restricted}), we have
\[
\{\Bar{H}^{s}_N \}_N \sim_{\rm{GLT},\sigma,\lambda} h_0^{CW}(x,\varphi(\theta)).
\]
\end{theorem}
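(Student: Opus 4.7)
The plan is to exhibit $\{\bar{H}^s_N\}_N$ as an element of the GLT $*$-algebra, assembled from a diagonal sampling term, a symmetrized Toeplitz-times-sampling term, and a spectrally negligible remainder, and then to apply Axioms \textbf{GLT 1}--\textbf{GLT 3} to read off the symbol. Concretely, I would split (\ref{seq restricted}) as
\[
\bar{H}^s_N = D_N(a) + L_N, \qquad a(x) = -\tfrac{\Gamma}{2}(2x-1)^2,
\]
where $a \in C[0,1]$ is Riemann integrable, so Axiom \textbf{GLT 2} immediately gives $\{D_N(a)\}_N \sim_{\mathrm{GLT}} a(x)$. The residual $L_N$ is the real symmetric tridiagonal matrix with zero diagonal and off-diagonal entries $(L_N)_{k,k+1} = -B\sqrt{(1-k/N)((k+1)/N)}$.

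Next I would set $b(x) = \sqrt{x(1-x)} \in C[0,1]$ and $c(\theta) = -2B\cos\theta \in L^1([-\pi,\pi])$. By \textbf{GLT 2} one has $\{D_N(b)\}_N \sim_{\mathrm{GLT}} b(x)$ and $\{T_N(c)\}_N \sim_{\mathrm{GLT}} c(\theta)$; by \textbf{GLT 3} (linearity, product, $*$-closure) the symmetrized product
\[
G_N = \tfrac{1}{2}\bigl[D_N(b)\,T_N(c) + T_N(c)\,D_N(b)\bigr]
\]
is a real symmetric GLT sequence with symbol $b(x) c(\theta) = -2B\sqrt{x(1-x)}\cos\theta$. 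A direct computation of the entries of $T_N(c)$ from the Fourier coefficients of $c$ shows that $G_N$ is tridiagonal with $(G_N)_{k,k+1} = -B\cdot\tfrac{1}{2}[b(k/N)+b((k+1)/N)]$, so $L_N - G_N$ is again tridiagonal and its off-diagonal entries compare the ``cross-sample'' $\sqrt{(1-k/N)((k+1)/N)}$ with the arithmetic mean of two adjacent values of $b$ on the grid $\{k/N\}$. Since $b$ is globally Hölder-$\tfrac{1}{2}$ and smooth on compact subsets of $(0,1)$, these differences are $O(N^{-1/2})$ uniformly in $k$; as a tridiagonal matrix with entries of size $\varepsilon_N$ has spectral norm at most $2\varepsilon_N$, one obtains $\|L_N - G_N\| \to 0$. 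The first characterization in the theorem of Section~\ref{ssec: zero-dist} (with $R_N = 0$) then yields $\{L_N - G_N\}_N \sim_\sigma 0$, and the third item of \textbf{GLT 2} upgrades this to $\{L_N - G_N\}_N \sim_{\mathrm{GLT}} 0$.

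The argument then closes by linearity in \textbf{GLT 3}:
\[
\{\bar{H}^s_N\}_N \sim_{\mathrm{GLT}} a(x) + b(x) c(\theta) = h_0^{CW}(x,\theta),
\]
and since each $\bar{H}^s_N$ is real symmetric, \textbf{GLT 1} delivers both $\sim_\sigma h_0^{CW}$ and $\sim_\lambda h_0^{CW}$, i.e.\ the claimed $\sim_{\mathrm{GLT},\sigma,\lambda}$ distribution. The main obstacle is the remainder estimate at the endpoints $x=0,1$, where $b$ is only Hölder-$\tfrac{1}{2}$; this forces the worst-case entrywise rate to be $O(N^{-1/2})$ rather than $O(N^{-1})$, but the fact that $b$ vanishes there and tridiagonality of the remainder keep the resulting spectral norm $o(1)$. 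A minor bookkeeping nuisance is the dimensional mismatch between $\bar{H}^s_N$ of size $N+1$ and the axiomatic sequences of size $n$, handled by reparametrizing $n = N+1$ and noting that the resulting shift of the sampling grid by $O(1/N)$ produces a further entrywise perturbation of the same order, absorbed into the same zero-distributed remainder.
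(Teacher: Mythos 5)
Your proof is correct and follows essentially the same strategy as the paper's: decompose $\bar{H}^s_N$ into a diagonal sampling part plus a tridiagonal part realized (up to a zero-distributed correction) as a symmetrized product of a diagonal sampling matrix with symbol $\sqrt{x(1-x)}$ and a Toeplitz matrix with symbol $-2B\cos\theta$, and then invoke Axioms \textbf{GLT 2}, \textbf{GLT 3}, and \textbf{GLT 1}. The only notable difference is one of rigor: you explicitly bound the entrywise remainder between the true ``cross-sample'' off-diagonal entries and the GLT surrogate (worst case $O(N^{-1/2})$ near the endpoints, where $\sqrt{x(1-x)}$ is only H\"older-$\tfrac12$) and also absorb the $k/N$ versus $k/(N+1)$ grid mismatch into the same zero-distributed correction, whereas the paper writes the sampling--Toeplitz decomposition as an exact identity and relegates the role of zero-distributed perturbations to a closing remark.
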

\begin{proof}
By considering $d=r=1$ it is immediate to see that the matrix $\Bar{H}^{s}_N$ can be written as the sum of diagonal sampling matrices as in Section \ref{ssec: sampl} and two products of sampling matrices and very basic unilevel Toeplitz matrices as in Section \ref{ssec: toep}. In fact we have
\begin{equation*}
 \Bar{H}^{s}_N = -\frac{\Gamma}{2}D_{N+1}\Big(\Big(2x-1\Big)^2\Big)-B D_{N+1}(\sqrt{1-x}\sqrt{x})T_{N+1}(e^{\iota \theta})-B T_{N+1}(e^{-\iota \theta}) D_{N+1}(\sqrt{1-x}\sqrt{x}).
\end{equation*}
Now by Axiom \textbf{GLT 2.1} we have $\{T_{N+1}(e^{\pm \iota \theta})\}_N\sim_{\text{GLT}} e^{\pm \iota \theta}$ and by Axiom \textbf{GLT 2.2} we deduce
\[
\left\{D_{N+1}\Big(\Big(2x-1\Big)^2\Big)\right\}_N\sim_{\text{GLT}}  \left( 2x-1 \right)^2, \ \ \
\{D_{N+1}(\sqrt{1-x}\sqrt{x})\}_N\sim_{\text{GLT}}  \sqrt{(1-x)x},
\]
simply because both functions $\left( 2x-1 \right)^2, \sqrt{1-x}\sqrt{x}$ are continuous on $[0,1]$ and a fortiori Riemann integrable.

Hence by using the $*$-algebra structure of the GLT sequences and more precisely Axiom \textbf{GLT 3.2}, Axiom \textbf{GLT 3.3}, we infer $\, \{\Bar{H}^{s}_N\}_N \sim_{\text{GLT}} -\frac{\Gamma}{2}  \left( 2x-1 \right)^2 -2B\cos(\theta)\sqrt{(1-x)x}$, which is compatible with the symbol
$h_0^{CW}(x,\varphi(\theta))$ indicated in Remark \ref{change variables}.
Finally \textbf{GLT 1.} implies
\begin{itemize}
    \item $\, \{\Bar{H}^{s}_N\}_N \sim_{\sigma} -\frac{\Gamma}{2}  \left( 2x-1 \right)^2 -2B\cos(\theta)\sqrt{(1-x)x}$.
    \begin{itemize}
        \item Furthermore, since $\Bar{H}^{s}_N$ is real and symmetric for any $N$, again by \textbf{GLT 1.}, we deduce
    \end{itemize}
    \item $\, \{\Bar{H}^{s}_N\}_N \sim_{\lambda} -\frac{\Gamma}{2}  \left( 2x-1 \right)^2 -2B\cos(\theta)\sqrt{(1-x)x}$.
\end{itemize}

Notice that $\, \{\Bar{H}^{s}_N+U_N\}_N \sim_{\text{GLT}} -\frac{\Gamma}{2}  \left( 2x-1 \right)^2 -2B\cos(\theta)\sqrt{(1-x)x}$ for any
$\{U_{n}\}_N$ such that  $\{U_{n}\}_N \sim_{\text{GLT}} 0$  by Axiom \textbf{GLT 3.2} and  Axiom \textbf{GLT 2.3} so that
$\, \{\Bar{H}^{s}_N+U_N\}_N \sim_{\sigma} -\frac{\Gamma}{2}  \left( 2x-1 \right)^2 -2B\cos(\theta)\sqrt{(1-x)x}$. Finally, as it happens for compact non-Hermitian perturbations of Jacobi matrix-sequences \cite{GoSe}, $\, \{\Bar{H}^{s}_N+U_N\}_N \sim_{\lambda} -\frac{\Gamma}{2}  \left( 2x-1 \right)^2 -2B\cos(\theta)\sqrt{(1-x)x}$ either if all $U_N$ are Hermitian or if the assumption in Axiom \textbf{GLT 5.} is satisfied by the non-Hermitian matrix-sequence perturbation $\{U_N\}_N$.
\end{proof}

\begin{remark}\label{Schr\"{o}dinger operator}
It is interesting to comment the relations between a standard second order centered finite difference (FD) discretization of the Schr\"{o}dinger operator and Theorem \ref{main result-bis}.
Let us define
$a(x):=2B\sqrt{(1-x)x}$ and $c(x):=-\frac{\Gamma}{2}(2x-1)^2-2B\sqrt{(1-x)x}$, i.e. we have set $\cos{\theta}=1$ in the definition of the symbol. For $N\in\mathbb{N}_+$, consider the Schr\"{o}dinger operator on $L^2(0,1)$:
$$H(N)u(x)=-\frac{1}{(N+1)^2}a(x)u''(x)+c(x)u(x)$$
with $u(0)=u(1)$. If we discretize the domain in uniform steps of width $h=1/(N+1)$, it follows that the resulting FD discretization matrix is a GLT sequence with symbol $a(x)(2-2\cos{\theta})+c(x)$. The factor $1/N$ simultaneously plays the role of the discretization step-size as well as the semi-classical parameter. It is clear that the ensuing GLT matrix-sequence is equivalent to the restricted Curie-Weiss model. Hence, the restricted Curie-Weiss model defines a Schr\"{o}dinger operator with potential $c$. In the parameter regime  $\Gamma=1$ and $B=1$, $c$  has the shape of a single well, cf. Section \ref{harmregime}, whilst for the choices  $\Gamma=1$ and $B\in (0,1)$, $c$  has the shape of a double wel, cf. Section \ref{schrregime}.
However, the fact that the parameter factor $1/N$ plays  simultaneously the role of the discretization step-size and of the semi-classical parameter is non-standard from the numerical analysis viewpoint and the whole potential of a related theoretical study has still to be explored further.
\end{remark}

\section{Numerical results}\label{sec: num}

In the present section, we give various visualizations of the spectral features of the matrices $\Bar{H}^{s}_N$, confirming the derivations in Theorem \ref{main result-bis}. We fix the value of $\Gamma$ and $B$ and, for these fixed values, we consider the matrix-size parameter $N$ equal to $40, 80, 160, 320$. We recall that Theorem \ref{main result-bis} is an asymptotic one, as all the GLT results, but the really impressive fact is that the spectrum of  $\Bar{H}^{s}_N$ adheres to the spectral symbol already for $N=40$.

\subsection{Asymptotic spectral behavior of $\Bar{H}^{s}_N$ with $\Gamma=B=1$}\label{harmregime}

As already mentioned, we consider the matrix
\begin{equation*}
 \Bar{H}^{s}_N = -\frac{\Gamma}{2}D_{N+1}\Big(\Big(2x-1\Big)^2\Big) {-}B D_{N+1}(\sqrt{1-x}\sqrt{x})T_{N+1}(e^{\iota \theta}){-}B T_{N+1}(e^{-\iota \theta}) D_{N+1}(\sqrt{1-x}\sqrt{x}),
\end{equation*}
whose associated matrix-sequence has proven to have a GLT nature in Section \ref{sec: our-pb} in Theorem \ref{main result-bis}.

We show the comparison between symbol $h_0^{CW}$ (for $u=1$), cf. \eqref{classicalCWmodel},  and the eigenvalues of  $\Bar{H}_N^s$. This comparison is carried out using the concept of {\em monotone rearrangement}, which enables the interpretation of the symbol as a single-variable function
\cite[Definition 2.1]{BarBianGar}.

We notice that the agreement is very good, even for moderate matrix-sizes, which is nontrivial given the asymptotic nature of the GLT distributional results.
A further remarkable fact is that the range of the spectral symbol $h_0^{CW}$ contains all the spectra, i.e., no outliers are observed. Again, this is a highly nontrivial matter and cannot be deduced directly from distributional results. In fact, the interplay between spectral localization and distributional properties is characteristic of linear positive operators (LPOs), such as Toeplitz operators and various classes of variable-coefficient coercive differential operators (see \cite[Corollary 6.2]{garoni2017} for the Toeplitz case and \cite{cma-rev} for a broader discussion).

\begin{figure}[H]
\centering
\includegraphics[scale=0.24]{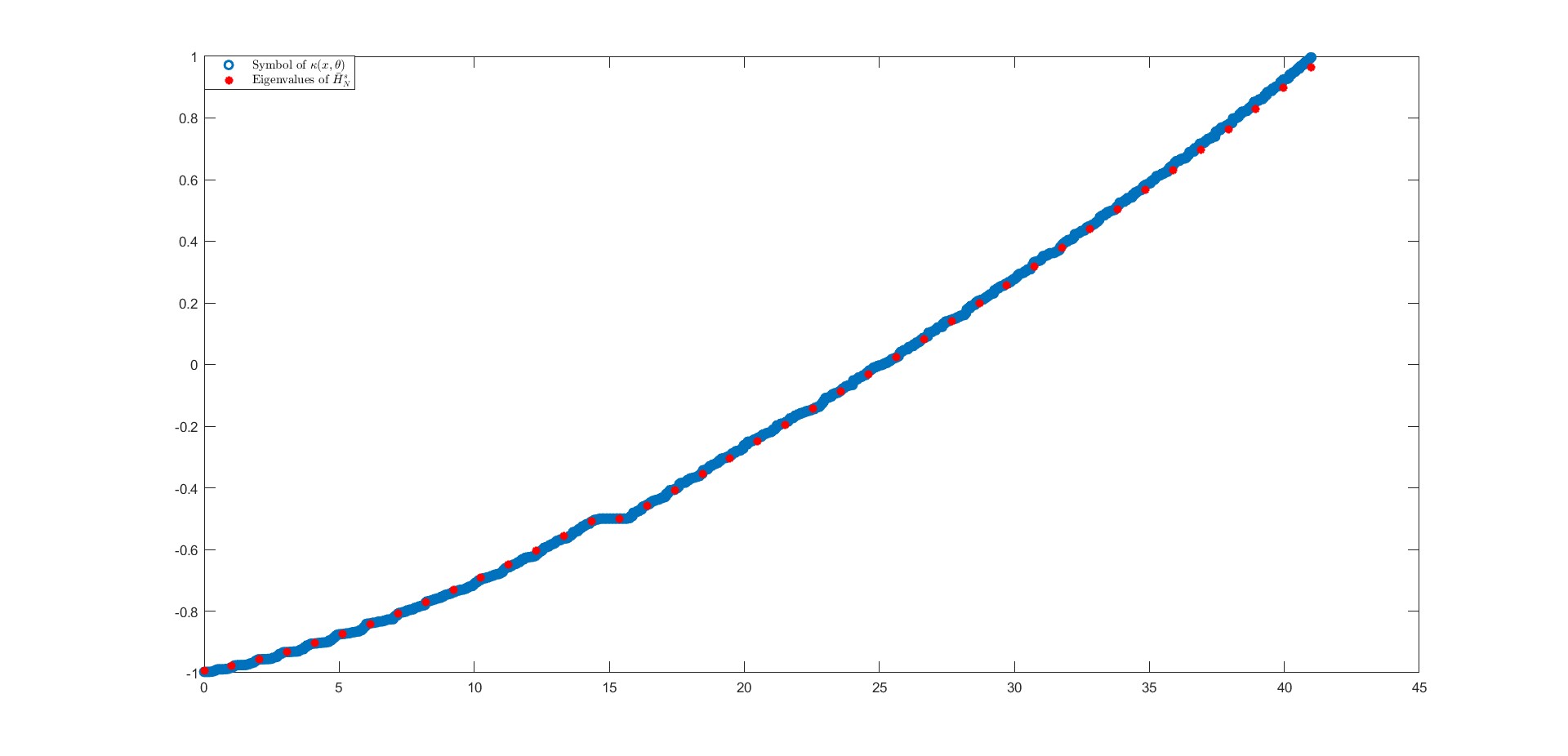}
\caption*{$N=40$}
\end{figure}
\begin{figure}[H]
\centering
\includegraphics[scale=0.24]{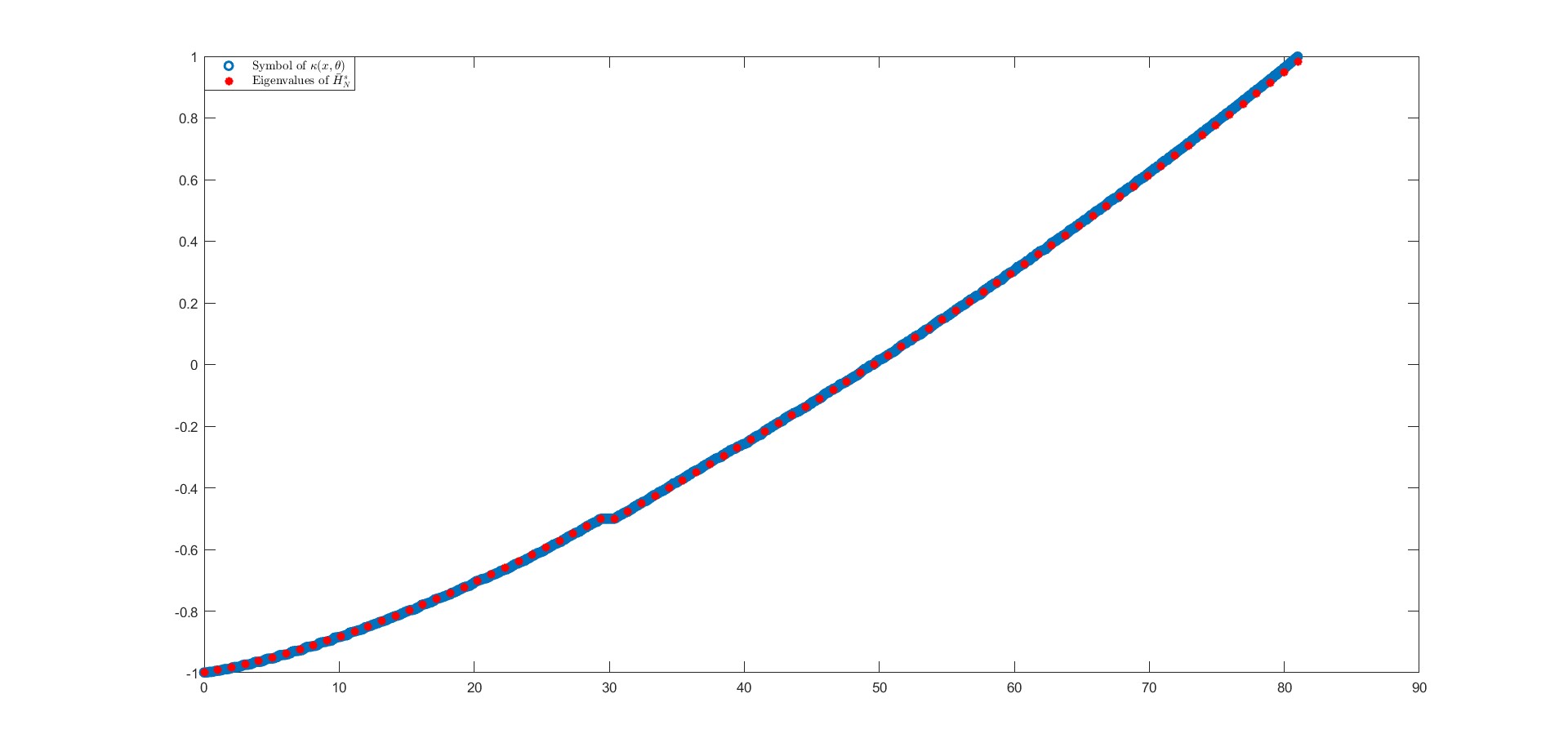}
\caption*{$N=80$}
\end{figure}
\begin{figure}[H]
\centering
\includegraphics[scale=0.24]{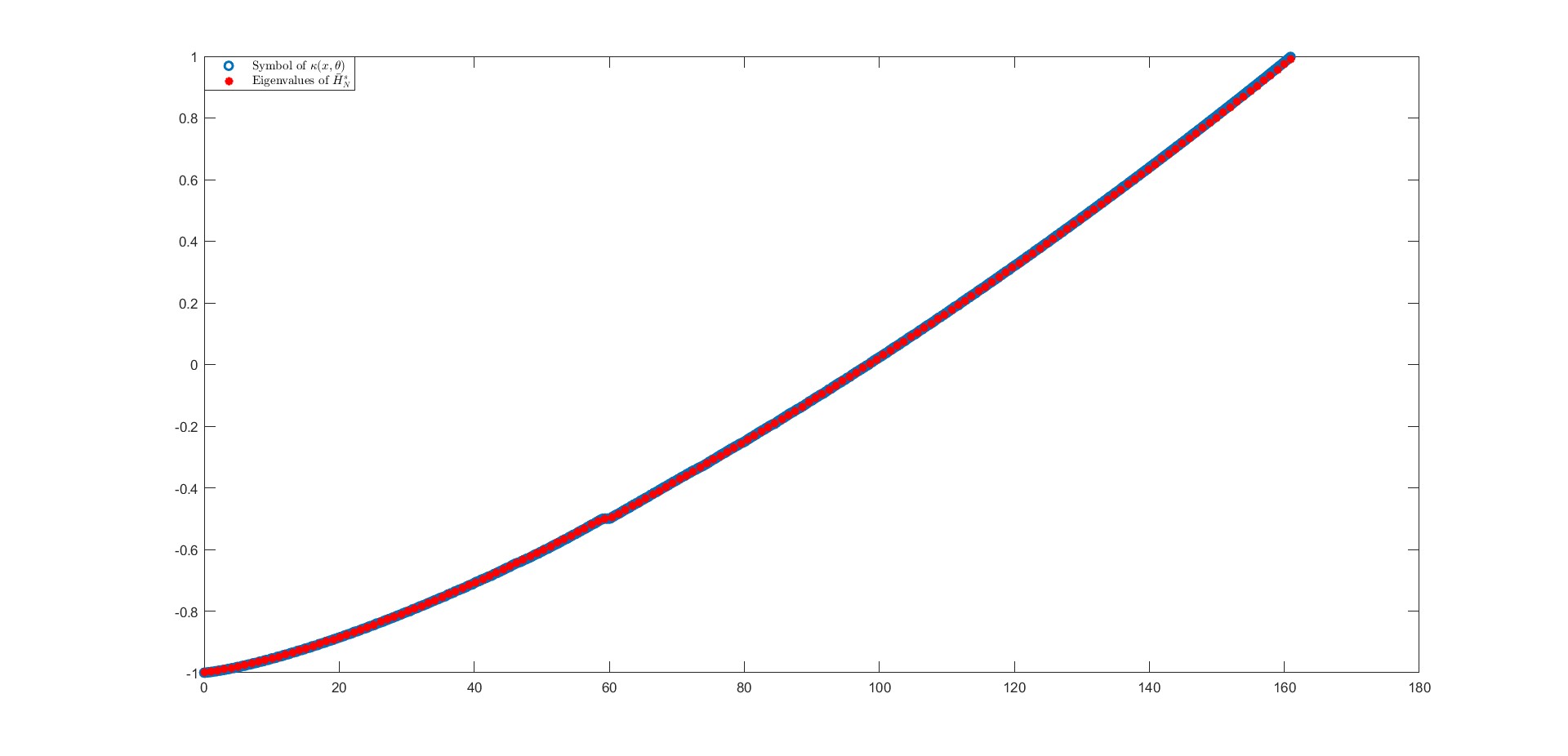}
\caption*{$N=160$}
\end{figure}
\begin{figure}[H]
\centering
\includegraphics[scale=0.24]{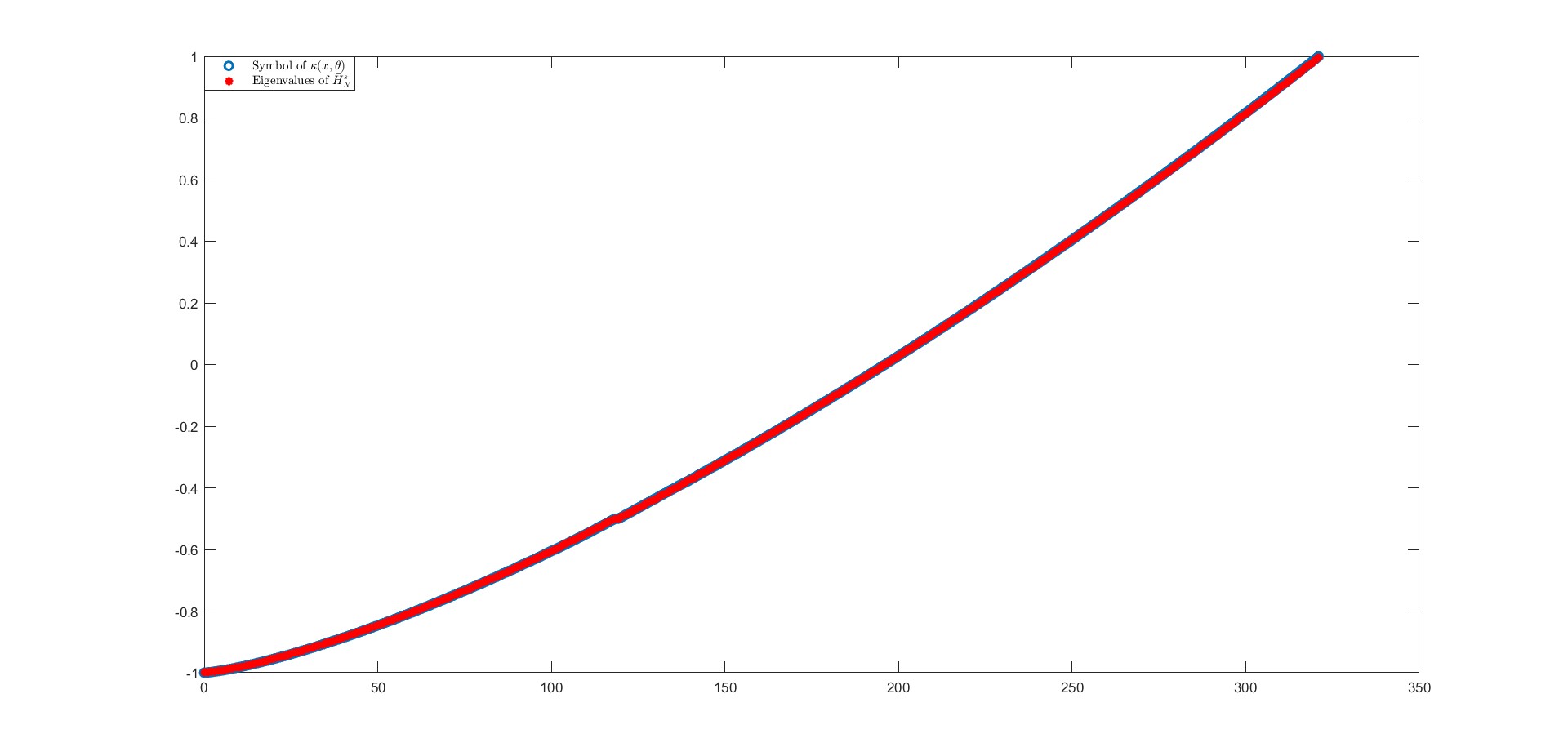}
\caption*{$N=320$}
\end{figure}

\subsection{Asymptotic spectral behavior of $\Bar{H}^{s}_N$ with $\Gamma=2B=1$}\label{schrregime}

{For $\Gamma=1$ and $B=\frac{1}{2}$ we show again the comparison between symbol and the eigenvalues.} The comments already provided in the case $\Gamma=B=1$ can be repeated verbatim also in this setting.

\begin{figure}[H]
\centering
\includegraphics[scale=0.24]{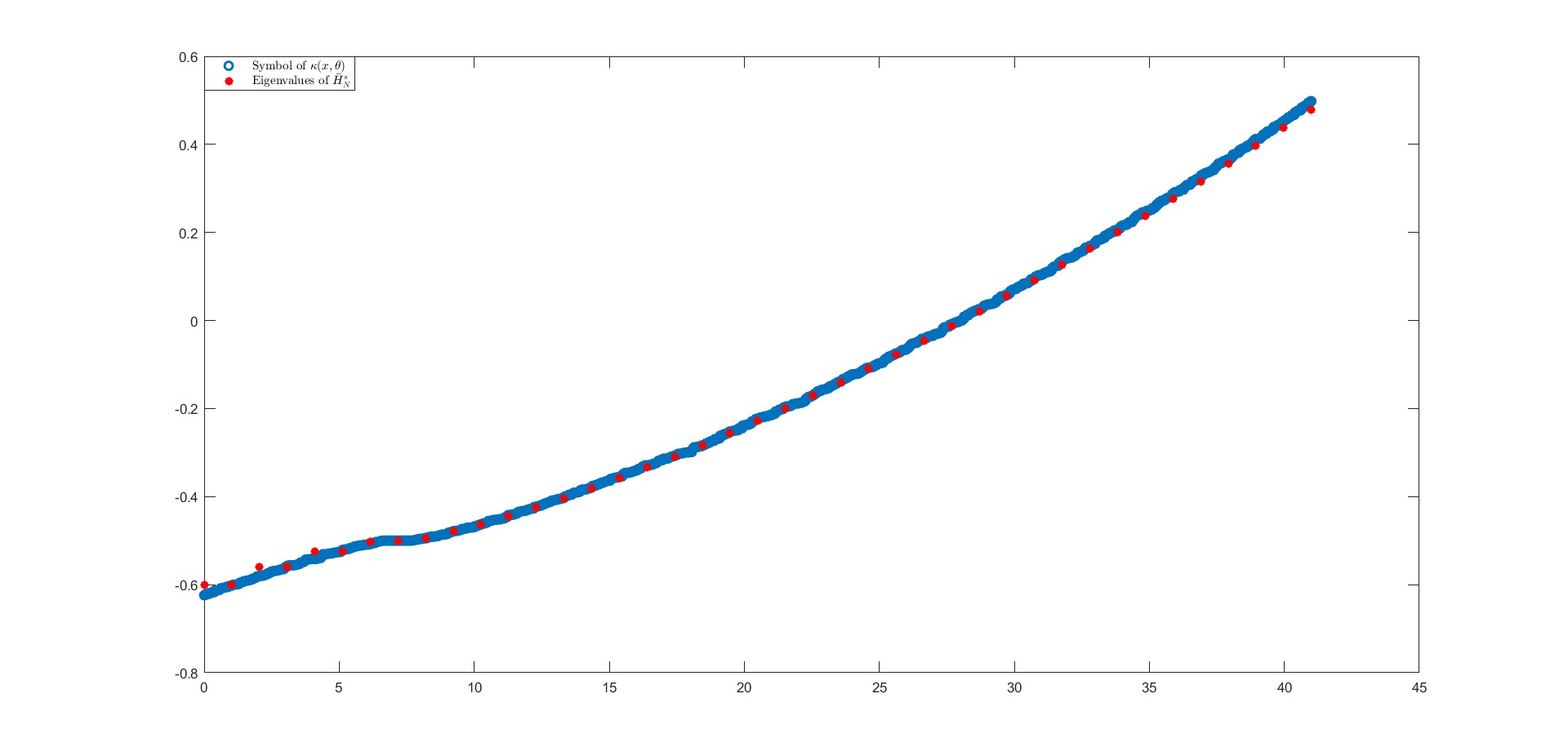}
\caption*{$N=40$}
\end{figure}
\begin{figure}[H]
\centering
\includegraphics[scale=0.24]{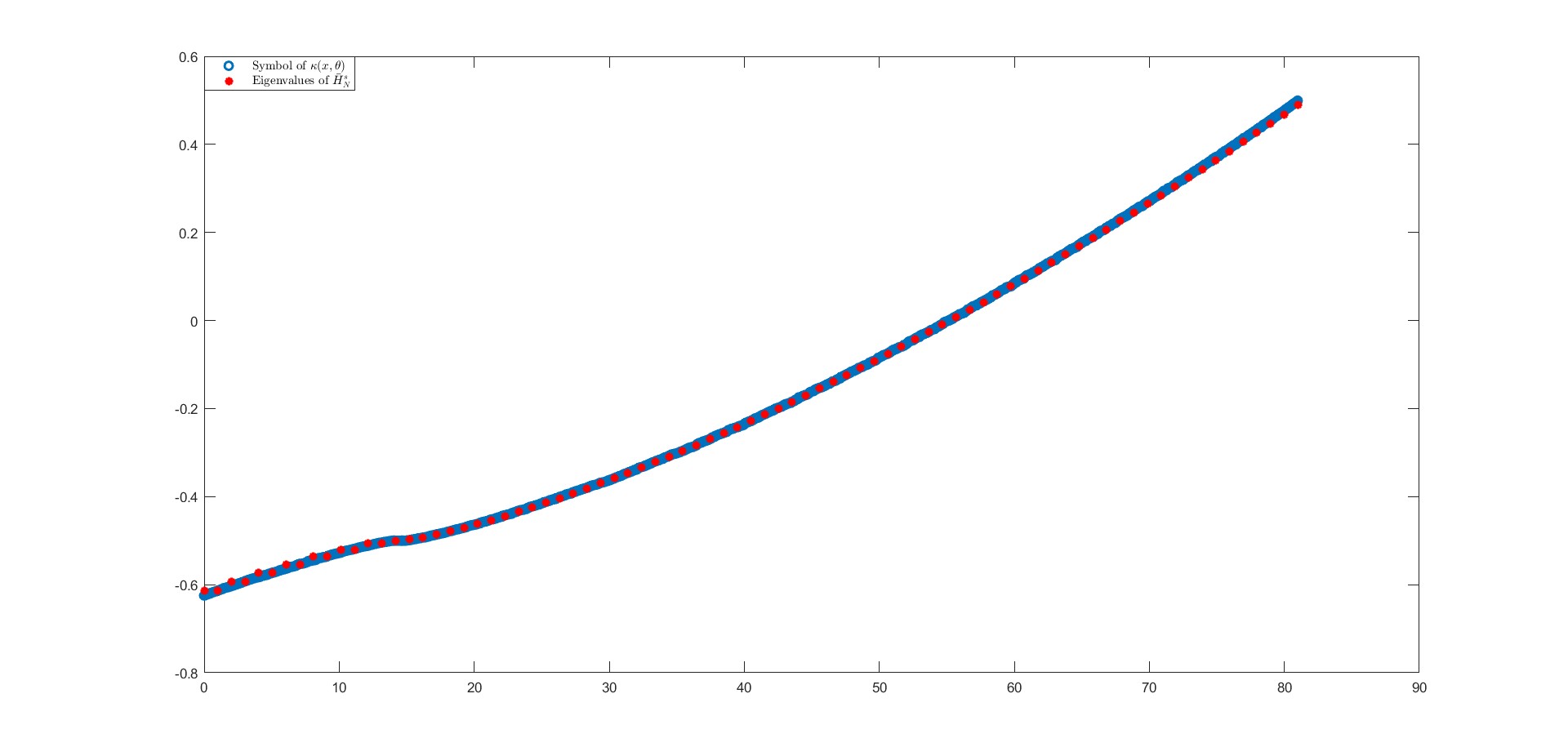}
\caption*{$N=80$}
\end{figure}
\begin{figure}[H]
\centering
\includegraphics[scale=0.24]{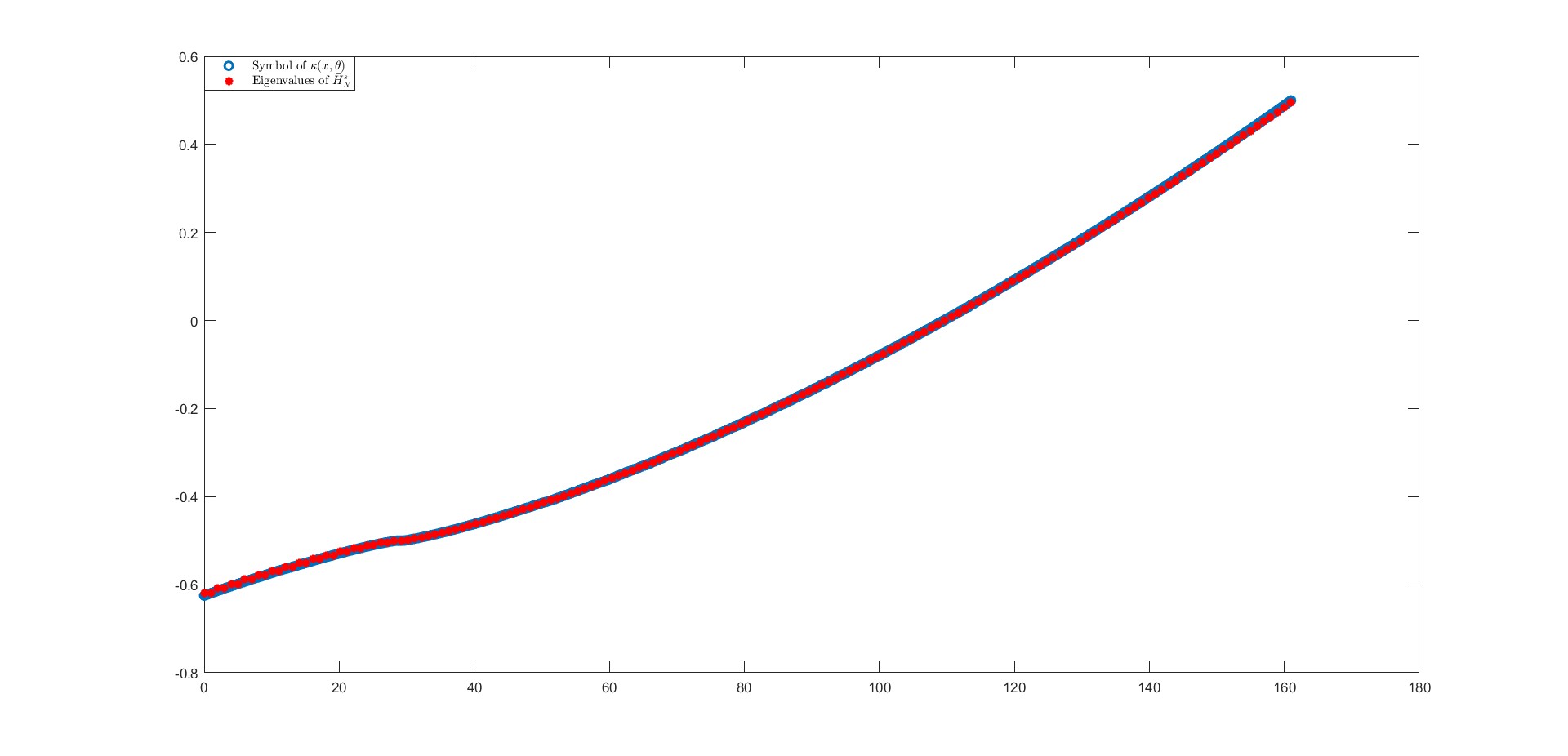}
\caption*{$N=160$}
\end{figure}
\begin{figure}[H]
\centering
\includegraphics[scale=0.24]{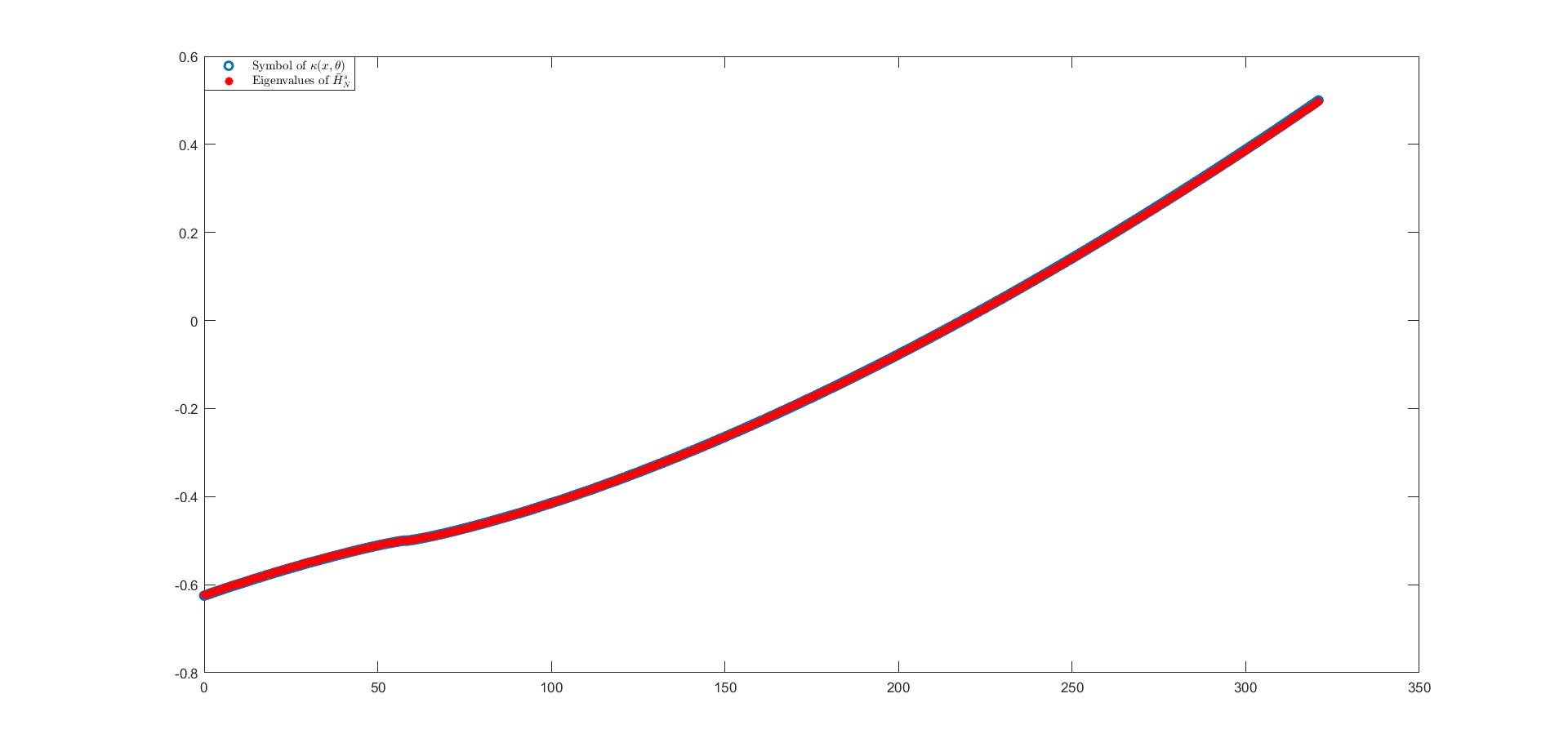}
\caption*{$N=320$}
\end{figure}

The figures provide clear evidence that the lowest eigenvalues become nearly doubly degenerate,  a characteristic feature of  a Schr\"{o}dinger operator with a double-well potential. For a more detailed discussion on the approximation of this Schr\"{o}dinger operator by the Curie-Weiss model, we refer the reader to \cite{Ven_Groenenboom_Reuvers_Landsman}.

\subsection{Extremal spectral behavior of $\Bar{H}^{s}_N$}

The figures reported in the previous section inform of localization results, beyond the proven distributional results. Here we show that in the two considered setting of parameters ($\Gamma=B=1$ and  $\Gamma=2B=1$), not only the spectrum of $\Bar{H}^{s}_N$ is contained in the interior of the range of the GLT symbol, but the behavior of the extreme eigenvalues is very regular. In fact, both the minimal and the minimal eigenvalues converge monotonically to the minimum and to maximum of the symbol, respectively: again this phenomenon is typical of matrix-valued LPOs \cite{cma-rev}.

For the minimum/maximum eigenvalue analysis, for $\Gamma=B=1$ and we proceed as follows.

\begin{itemize}
    \item The GLT symbol is $\kappa(x,\theta)=-\frac{\Gamma}{2}(2x-1)^{2}- 2B\cos(\theta)\sqrt{(1-x)x}$.
    \item We consider $m=\underset{(x,\theta) \in [0,1]\times [0,\pi]}{\text{min}} \kappa(x,\theta)$,  $M=\underset{(x,\theta) \in [0,1]\times [0,\pi]}{\text{max}} \kappa(x,\theta)$.
\end{itemize}
\begin{itemize}
    \item Take $N_{j} +1=40*2^{j}, \quad\quad j=0,1,2,3, $
    \item Compute $\tau_{j}= \lambda_{\min}(\bar{H}_{N_{j}}^{s})-m, \quad \quad j=0,1,2,3, $
    \item Compute $\alpha_{j}=\log\left(\frac{\tau_{j}}{\tau_{j+1}}\right), \quad \quad j=0,1,2. $
\end{itemize}
\vspace{9pt}
\begin{table}[H]
\centering
\caption{$\Gamma=1$, $B=1$ and $m=-1$}
\begin{tabular}{c|c|c|c}
   & $\lambda_{\min}(\bar{H}_{N_{j}}^{s})$ & $\tau_{j}$ & $\alpha_{j}$ \\
  \hline
 $N_{j} +1=40$ & -0.9936 & 0.0064 & 0.4082 \\
  \hline
  $N_{j} +1=80$& -0.9975 & 0.0025 & 0.3979 \\
  \hline
  $N_{j} +1=160$ & -0.9990 & 0.001 & 0.3979 \\
  \hline
  $N_{j} +1=320$ & -0.9996 & 0.0004 &   \\
\end{tabular}
\end{table}

\begin{itemize}
    \item Take $N_{j} +1=40*2^{j}, \quad\quad j=0,1,2,3, $
    \item Compute $\hat{\tau}_{j}= M-\lambda_{\max}(\bar{H}_{N_{j}}^{s}), \quad \quad j=0,1,2,3, $
    \item Compute $\beta{j}=\log\left(\frac{\hat{\tau}_{j}}{\hat{\tau}_{j+1}}\right), \quad \quad j=0,1,2. $
\end{itemize}
\vspace{9pt}
\begin{table}[H]
\centering
\caption{$\Gamma=1$, $B=1$ and $M=1$}
\begin{tabular}{c|c|c|c}
   & $\lambda_{\max}(\bar{H}_{N_{j}}^{s})$ & $\hat{\tau}_{j}$ & $\beta_{j}$ \\
  \hline
 $N_{j} +1=40$ & 0.9654 & 0.0346 & 0.2960 \\
  \hline
  $N_{j} +1=80$& 0.9825 & 0.0175 & 0.2985 \\
  \hline
  $N_{j} +1=160$ & 0.9912 & 0.0088 & 0.3010 \\
  \hline
  $N_{j} +1=320$ & 0.9956 & 0.0044 &   \\
\end{tabular}
\end{table}

The interpretation of the above tables is very informative. In fact, if we consider $\psi$ the nondecreasing rearrangement of the GLT symbol $\kappa(x,\theta)$ defined on the standard interval $[0,1]$, then the numerical tests suggest that
\[
\lambda_{\min}(\bar{H}_{N_{j}}^{s})-m \sim {1\over N^{0.4}}, \ \ \ \ M-\lambda_{\max}(\bar{H}_{N_{j}}^{s}) \sim {1\over N^{0.3}}.
\]
It remains to investigate analytically if the corresponding analytic behavior holds that is
\[
\psi(t)-m \sim {t^{0.4}}, \ \ \ \ M-\psi(t)  \sim {(1-t)^{0.3}}.
\]

We now continue, along the previous reasoning, {with the computation of the minimum/maximum eigenvalues analysis, we proceed as follows in the case where $\Gamma=2B=1$.}
\begin{itemize}
    \item {The GLT symbol is $\kappa(x,\theta)=-\frac{\Gamma}{2}(2x-1)^{2}-2B\cos(\theta)\sqrt{(1-x)x}$}
    \item {We consider $m=\underset{(x,\theta) \in [0,1]\times [0,\pi]}{\text{min}} \kappa(x,\theta)$  $;$  $M=\underset{(x,\theta) \in [0,1]\times [0,\pi]}{\text{max}} \kappa(x,\theta)$}
\end{itemize}

\begin{itemize}
    \item {Take $N_{j} +1=40*2^{j}, \quad\quad j=0,1,2,3, $}
    \item {Compute $\tau_{j}= \lambda_{\min}(\Bar{H}_{N_{j}}^{s})-m, \quad \quad j=0,1,2,3, $}
    \item {Compute $\alpha_{j}=\log\left(\frac{\tau_{j}}{\tau_{j+1}}\right), \quad \quad j=0,1,2. $}
\end{itemize}

\vspace{9pt}
\begin{table}[H]
\centering
\caption{{$\Gamma=1$, $B=\frac{1}{2}$ and $m=-0.6241$}}
\begin{tabular}{c|c|c|c}
   & {$\lambda_{\min}(\Bar{H}_{N_{j}}^{s})$} & {$\tau_{j}$} & {$\alpha_{j}$} \\
  \hline
 {$N_{j} +1=40$} &{-0.6007} & {0.0234} & {0.3521} \\
  \hline
  {$N_{j} +1=80$} & {-0.6137} & {0.0104} & {0.3542} \\
  \hline
  {$N_{j} +1=160$} & {-0.6195} & {0.0046} & {0.4074} \\
  \hline
  {$N_{j} +1=320$} & {-0.6223} & {0.0018} &  \\
\end{tabular}
\end{table}

\begin{itemize}
    \item {Take $N_{j} +1=40*2^{j}, \quad\quad j=0,1,2,3, $}
    \item {Compute $\hat{\tau}_{j}= M-\lambda_{\max}(\Bar{H}_{N_{j}}^{s}), \quad \quad j=0,1,2,3, $}
    \item {Compute $\beta{j}=\log\left(\frac{\hat{\tau}_{j}}{\hat{\tau}_{j+1}}\right), \quad \quad j=0,1,2. $}
\end{itemize}

\vspace{9pt}
\begin{table}[H]
\centering
\caption{{$\Gamma=1$, $B=\frac{1}{2}$ and $M=0.4982$}}
\begin{tabular}{c|c|c|c}
   & {$\lambda_{\max}(\Bar{H}_{N_{j}}^{s})$} & {$\hat{\tau}_{j}$} & {$\beta_{j}$} \\
  \hline
 {$N_{j} +1=40$} & {0.4789} & {0.0193} & {0.3361} \\
  \hline
  {$N_{j} +1=80$} & {0.4893} & {0.0089} & {0.3930} \\
  \hline
  {$N_{j} +1=160$} & {0.4946} & {0.0036} & {0.4010} \\
  \hline
  {$N_{j} +1=320$} & {0.4973} & {0.0009} & {} \\
\end{tabular}
\end{table}

The interpretation of the numerical results is of interest. Again, as observed for the previous case, both the minimal and the minimal eigenvalues converge monotonically to the minimum and to maximum of the symbol, respectively, in line with results which are usually observed for matrix-valued LPOs.

\section{Conclusions}\label{sec: end}

In our study we have used the theory of GLT $*$-algebras for dealing with structured matrix-sequences, stemming from the modelling of mean-field quantum spin systems. More precisely, we have expressed the related matrix-sequences in the GLT formalism. Two cases have been considered in detail. In both cases, we have found the spectral distributions in the context of the GLT theory and the theoretical results have been confirmed via visualizations and numerical tests.
Many open problems remain. Among them we can cite the study of the extremal eigenvalues and the expansion of the considered matrix-sequences via GLT momentary GLT symbols \cite{momentary-1,momentary-2}, in order to have a finer description of the spectrum with respect to Theorem \ref{main result-bis}. For instance, in the nonreduced case, the GLT symbol equal to zero in Theorem \ref{main result} could hyde a finer structure in the sense that it could be imagined that there exists a numerical sequence $\alpha_N$ converging to zero such that the considered matrix-sequence divided by $\alpha_N$ is still a GLT sequence with nonzero GLT symbol. In that setting, it is possible that the emerging GLT structure could be multilevel (given the geometry on the sphere as in (\ref{classical CW})), block (given the presence of the basic blocks in (\ref{basic matrices})) \cite{barbarino2020uni, barbarino2020multi, garoni2017, garoni2018} or even of reduced type, since the terms in the general model have a varying sizes $C(J,N)$; see \cite{GLT-1}[pp. 398-399], \cite{GLT-2}[Section 3.1.4]  for the original idea and \cite{reduced} for an exhaustive treatment of reduced GLT matrix-sequences: it is clear that such a more precise result would provide a substantial improvement with respect to Theorem \ref{main result}.

Finally, it would be desirable to explore GLT theory in connection with locally interacting quantum spin models, such as the quantum Ising or quantum Heisenberg models, which represent realistic interacting systems commonly found in condensed matter physics. Progress in this direction would significantly contribute to our understanding of the fundamental structure of matter.

\section*{Acknowledgments}
The research of Stefano Serra-Capizzano is supported by the PRIN-PNRR project “MATH-ematical tools for predictive maintenance and PROtection of CULTtural heritage (MATHPROCULT)” (code P20228HZWR, CUP J53D23003780006), by INdAM-GNCS Project “Tecniche numeriche per problemi di grandi dimensioni” CUP_E53C24001950001, and by the European High-Performance Computing Joint Undertaking (JU) under Grant Agreement 955701. The JU receives support from the European Union’s Horizon 2020 research and innovation programme and Belgium, France, Germany, Switzerland. Furthermore Stefano Serra-Capizzano is grateful for the support of the Laboratory of Theory, Economics and Systems – Department of Computer Science at Athens University of Economics and Business. Finally Muhammad Faisal Khan and Stefano Serra-Capizzano are partly supported by Italian National Agency INdAM-GNCS.

\end{document}